\definecolor{newred}{HTML}{E66100}
\definecolor{newgreen}{HTML}{44AA99}
\definecolor{newyellow}{HTML}{E899D5}
\DeclareMathOperator*{\E}{\mathbb{E}}
\newcommand{\argmin}{\arg\!\min}
\date{April 2021}
\def\defeq{\equiv}
\newcommand{\EE}[2]{\mathbb{E}_{#1\!\!}\left[#2\right]}
\def\E#1{\EE{\,}{#1}}
\def\bE{{\mathbf E}}
\def\bI{{\mathbf I}}
\def\bA{{\mathbf A}}
\def\bf{{\mathbf f}}
\def\bg{{\mathbf g}}
\def\bu{{\mathbf u}}
\def\bv{{\mathbf v}}
\def\bw{{\mathbf w}}
\def\bB{{\mathbf B}}
\def\bU{{\mathbf U}}
\def\bD{{\mathbf D}}
\def\bT{{\mathbf T}}
\def\bS{{\mathbf S}}
\def\bW{{\mathbf W}}
\def\bC{{\mathbf C}}
\def\bx{{\mathbf x}}
\def\be{{\mathbf e}}
\def\bb{{\mathbf b}}
\def\by{{\mathbf y}}
\def\bs{{\mathbf s}}
\def\bUpsilon{{\bm \Upsilon}}
\def\bgamma{{\bm \gamma}}
\def\bLambda{{\bm \Lambda}}
\def\biota{{\bm \iota}}
\def\balpha{{\bm \alpha}}
\def\bxi{{\bm \xi}}
\def\bbeta{{\bm \beta}}
\def\btheta{{\bm \theta}}
\def\bzeta{{\bm \zeta}}
\def\bSigma{{\bm \Sigma}}
\def\bGamma{{\bm \Gamma}}
\def\bTheta{{\bm \Theta}}
\def\bvarepsilon{{\bm \varepsilon}}
\DeclareMathAlphabet{\mathcal}{OMS}{cmsy}{m}{n}
\DeclarePairedDelimiter\abs{\lvert}{\rvert}%
\DeclarePairedDelimiter\norm{\lVert}{\rVert}%
\let\oldabs\abs
\def\abs{\@ifstar{\oldabs}{\oldabs*}}
\let\oldnorm\norm
\def\norm{\@ifstar{\oldnorm}{\oldnorm*}}
\theoremstyle{plain}
\newtheorem{thm}{Theorem}
\newcommand{\vertiii}[1]{{\left\vert\kern-0.25ex\left\vert\kern-0.25ex\left\vert #1 
		\right\vert\kern-0.25ex\right\vert\kern-0.25ex\right\vert}}
{
	\theoremstyle{plain}
	
}
\def\mathcolor#1#{\@mathcolor{#1}}
\def\@mathcolor#1#2#3{%
	\protect\leavevmode
	\begingroup
	\color#1{#2}#3%
	\endgroup
}
\numberwithin{equation}{section}
\newtheorem{lem}{Lemma}
\begin{document}	
	\title{\textbf{Learning from Forecast Errors: \\A New Approach to Forecast Combinations}}
	\author{
		Tae-Hwy Lee\footnote{Department of Economics, University of California, Riverside. Email: tae.lee@ucr.edu.}\hskip 4mm \ and \hskip 2mm
		Ekaterina Seregina\footnote{Department of Economics, University of California, Riverside. Email: ekaterina.seregina@email.ucr.edu.}\hskip 8mm 
	}
	\maketitle
	\thispagestyle{empty}
	
	\begin{abstract}
		\begin{spacing}{2}
	Forecasters often use common information and hence make common mistakes. We propose a new approach, Factor Graphical Model (FGM), to forecast combinations that separates idiosyncratic forecast errors from the common errors. FGM exploits the factor structure of forecast errors and the sparsity of the precision matrix of the idiosyncratic errors. We prove the consistency of forecast combination weights and mean squared forecast error estimated using FGM, supporting the results with extensive simulations. Empirical applications to forecasting macroeconomic series shows that forecast combination using FGM outperforms combined forecasts using equal weights and graphical models without incorporating factor structure of forecast errors.
	\end{spacing}

		\vskip 2mm
		\noindent \textit{Keywords}: High-dimensionality; Approximate Factor Model; Graphical Lasso; Nodewise Regression; Precision Matrix; Sparsity
		\vskip 2mm
		
		\noindent \textit{JEL Classifications}: C13, C38, C55
		
		\newpage
	\end{abstract} 
	
	\newpage 
	\setlength{\baselineskip}{22pt}
	\setcounter{page}{1}
		\setstretch{1.9}
	\section{Introduction}
	A search for the best forecast combination has been an important  on-going research question in economics. \cite{CLEMEN1989} pointed out that combining forecasts is \enquote{practical, economical and useful. Many empirical tests have demonstrated the value of composite forecasting. We no longer need to justify that methodology}. However, as demonstrated by \cite{DIEBOLD2018}, there are still some unresolved issues. Despite the findings based on the theoretical grounds, equal-weighted forecasts have proved surprisingly difficult to beat. Many methodologies that seek for the best forecast combination use equal weights as a benchmark: for instance, \cite{DIEBOLD2018} develop \enquote{partially egalitarian Lasso}.
	
	The success of equal weights is partly due to the fact that the forecasters use the same set of public information to make forecasts, hence, they tend to make common mistakes. For example, in the European Central Bank's Survey of Professional forecasters of Euro-area real GDP growth, the forecasters tend to \textit{jointly} understate or overstate GDP growth. Therefore, we stipulate that the forecast errors include common and idiosyncratic components, which allows the forecast errors to move together due to the common error component. Our paper provides a simple framework to learn from analyzing forecast errors: we separate unique errors from the common errors to improve the accuracy of the combined forecast.

	Dating back to \cite{GrangerBatesWeights}, the well-known expression for the optimal forecast combination weights requires an estimator of inverse covariance (precision) matrix. Graphical models are a powerful tool to estimate precision matrix directly, avoiding the step of obtaining an estimator of covariance matrix to be inverted. Prominent examples of graphical models include Graphical Lasso (\cite{GLASSO}) and nodewise regression (\cite{meinshausen2006}). Despite using different strategies for estimating precision matrix, all graphical models assume that the latter is sparse: many entries of precision matrix are zero, which is a necessary condition to consistently estimate inverse covariance. Our paper demonstrates that such assumption contradicts the stylized fact that experts tend to make common mistakes and hence the forecast errors move together through common factors. We show that graphical models fail to recover entries of precision matrix under the factor structure.	
	
	This paper overcomes the aforementioned challenge and develops a new precision matrix estimator for the forecast errors under the approximate factor model with unobserved factors. We call our algorithm the \textit{Factor Graphical Model}. We use a factor model to estimate an idiosyncratic component of the forecast errors, and then apply a Graphical model (Graphical Lasso or nodewise regression) for the estimation of the precision matrix of the idiosyncratic component.
	
	There are a few papers that used graphical models in different contexts to estimate the covariance matrix of the idiosyncratic component when the factors are known and the loadings are assumed to be constant. \cite{Brownlees2018JAE} estimate a sparse covariance matrix for high-frequency data and construct the realized network for financial data. \cite{Brownlees2018EJS} develop a power-law partial correlation network based on the Gaussian graphical models. \cite{koike2019biased} uses the Weighted Graphical Lasso to estimate a sparse covariance matrix of the idiosyncratic component for a factor model with observable factors for high-frequency financial data.
	
	Our paper makes several contributions. First, we allow the forecast errors to be highly correlated  due to the common component which is motivated by the stylized fact that the forecasters tend to jointly understate or overstate the predicted series of interest. Second, we develop a high-dimensional precision matrix estimator which combines the benefits of the \textit{factor} structure and \textit{sparsity} of the precision matrix of the idiosyncratic component for the forecast combination under the approximate factor model. We prove consistency of forecast combination weights and the Mean Squared Forecast Error (MSFE) estimated using Factor Graphical models. Third, an empirical application to forecasting macroeconomic series in big data environment shows that incorporating the factor structure of the forecast errors into the graphical models improves the performance of a combined forecast over forecast combination using equal weights and graphical models without factors.
	
	The paper is structured as follows: Section 2 reviews Graphical Lasso and nodewise regression. Section 3 studies the approximate factor models for the forecast combination. Section 4 introduces the Factor Graphical Models and discusses the choice of the tuning parameters. Section 5 contains theoretical results and Section 6 validates these results
	using simulations. Section 7 studies an empirical application for macroeconomic time-series. Section 8 concludes and Section 9 collects the proofs of the theorems.
	
	\textbf{Notation}. For the convenience of the reader, we summarize the notation to be used throughout the paper. Let $\mathcal{S}_p$ denote the set of all $p \times p$ symmetric matrices. For any matrix $\bC$, its $(i,j)$-th element is denoted as $c_{ij}$. Given a vector $\bu\in \mathbb{R}^d$ and a parameter $a\in \lbrack1,\infty)$, let $\norm{\bu}_a$ denote $\ell_a$-norm. Given a matrix $\bU \in\mathcal{S}_p$, let $\Lambda_{\text{max}}(\bU) \defeq \Lambda_1(\bU) \geq \Lambda_2(\bU)\geq \ldots \geq \Lambda_{\text{min}}(\bU) \defeq \Lambda_p(\bU)$ be the eigenvalues of $\bU$.
	 Given a matrix $\bU \in \mathbb{R}^{p\times p}$ and parameters  $a,b\in \lbrack1,\infty)$, let $\vertiii{\bU}_{a,b}\defeq \max_{\norm{\by}_a=1}\norm{\bU\by}_{b}$ denote the induced matrix-operator norm. The special cases are $\vertiii{\bU}_1\defeq \max_{1\leq j\leq p}\sum_{i=1}^{p}\abs{u_{i,j}}$ for the $\ell_1/\ell_1$-operator norm;  the operator norm ($\ell_2$-matrix norm) $\vertiii{\bU}_{2}^{2}\defeq\Lambda_{\text{max}}(\bU\bU')$ is equal to the maximal singular value of $\bU$.
 Finally, $\norm{\bU}_{\infty}\defeq\max_{i,j}\abs{u_{i,j}}$ denotes the element-wise maximum.

	\section{Graphical Models for Forecast Errors}
	This section briefly reviews a class of models, called graphical models, that search for the estimator of the precision matrix. In graphical models, each vertex represents a random variable, and the graph visualizes the joint distribution of the entire set of random variables.
	\textit{Sparse graphs} have a relatively small number of edges.
	
	Suppose we have $p$ competing forecasts of the univariate series $y_t$,  $t=1,\ldots,T$. Let $\be_t=(e_{1t},\ldots,e_{pt})' \sim \mathcal{N} (\mathbf{0}, \bSigma)$ be a $p \times 1$ vector of forecast errors. Assume they follow a Gaussian distribution.
	The precision matrix $\bSigma^{-1}\defeq \bTheta$ contains information about partial covariances between the variables. For instance, if $\theta_{ij}$, which is the $ij$-th element of the precision matrix, is zero, then the variables $i$ and $j$ are conditionally independent, given the other variables.
	
	Let $\bW$ be the estimate of $\bSigma$. Given a sample $\{\be_t\}_{t=1}^{T}$, let $\bS = (1/T)\sum_{t=1}^{T}(\be_t)(\be_t)'$ denote the sample covariance matrix, which can be used as a choice for $\bW$. Also, let $\widehat{\bD}^2\defeq \textup{diag}(\bW)$. We can write down the Gaussian log-likelihood (up to constants) $l(\bTheta)=\log\det(\bTheta)-\text{trace}(\bW\bTheta)$. When $\bW=\bS$, the maximum likelihood estimator of $\bTheta$ is $\widehat{\bTheta}=\bS^{-1}$.
	
	In the high-dimensional settings it is necessary to regularize the precision matrix, which means that some edges will be zero. In the following subsections we discuss two most widely used techniques to estimate sparse high-dimensional precision matrices.
	\subsection{Graphical Lasso}
	The first approach to induce sparsity in the estimation of precision matrix is to add penalty to the maximum likelihood and use the connection between the precision matrix and regression coefficients to maximize the following \textit{weighted penalized log-likelihood} (\cite{Sara2018}):
	\begin{align} \label{eq62}
	&\widehat{\bTheta}_{\lambda}=\argmin_{\bTheta=\bTheta'}\textup{trace}(\bW\bTheta)-\log\det(\bTheta)+\lambda\sum_{i\neq j}\widehat{d}_{ii}\widehat{d}_{jj}\abs{\theta_{ij}},
	\end{align}
	over positive definite symmetric matrices, where $\lambda\geq0$ is a penalty parameter. The subscript $\lambda$ in $\widehat{\bTheta}_{\lambda}$ means that the solution of the optimization problem in \eqref{eq62} will depend upon the choice of the tuning parameter. More details on the latter are provided in Subsection 4.1 that describes how to choose the shrinkage intensity in practice. In order to simplify notation, we will omit the subscript.
	
	One of the most popular and fast algorithms to solve the optimization problem in \eqref{eq62} is called the Graphical Lasso (GLASSO), which was introduced by \cite{GLASSO}. Define the following partitions of $\bW$, $\bS$ and $\bTheta$:
	\begin{equation} \label{eq43}
	\bW=\begin{pmatrix}
	\underbrace{\bW_{11}}_{(p-1)\times(p-1)}&\underbrace{\bw_{12}}_{(p-1)\times 1}\\\bw_{12}'&w_{22}
	\end{pmatrix}, \bS=\begin{pmatrix}
	\underbrace{\bS_{11}}_{(p-1)\times(p-1)}&\underbrace{\bs_{12}}_{(p-1)\times 1}\\\bs_{12}'&s_{22}
	\end{pmatrix}, \bTheta=\begin{pmatrix}
	\underbrace{\bTheta_{11}}_{(p-1)\times(p-1)}&\underbrace{\btheta_{12}}_{(p-1)\times 1}\\\btheta_{12}'&\theta_{22}
	\end{pmatrix}.
	\end{equation}
	Let $\bbeta\defeq -\btheta_{12}/\theta_{22}$. The idea of GLASSO is to set $\bW= \bS+\lambda\bI$ in \eqref{eq62} and combine the gradient of \eqref{eq62} with the formula for partitioned inverses to obtain the following $\ell_1$-regularized quadratic program
	\begin{equation}\label{eq50}
	\widehat{\bbeta}=\argmin_{\bbeta\in \mathbb{R}^{p-1}}\Bigl\{ \frac{1}{2}\bbeta'\bW_{11}\bbeta-\bbeta'\bs_{12}+\lambda\norm{\bbeta}_1\Bigr\},
	\end{equation} 
	As shown by \cite{GLASSO}, \eqref{eq50} can be viewed as a LASSO regression, where the LASSO estimates are functions of the inner products of $\bW_{11}$ and $s_{12}$. Hence, \eqref{eq62} is equivalent to $p$ coupled LASSO problems. Once we obtain $\widehat{\bbeta}$, we can estimate the entries $\bTheta$ using the formula for partitioned inverses. GLASSO procedure is summarized in Algorithm \ref{alg1a}.
	\begin{spacing}{1.25}
	\begin{algorithm}[H]
		\caption{Graphical Lasso (\cite{GLASSO})}
		\label{alg1a}
		\begin{algorithmic}[1]
			\STATE 	Initialize $\bW= \bS+\lambda\bI$. The diagonal of $\bW$ remains the same in what follows.
			\STATE Repeat for $j=1,\ldots,p,1,\ldots,p,\ldots$ until convergence:
			\begin{itemize}
				\item Partition $\bW$ into part 1: all but the $j$-th row and column, and part 2: the $j$-th row and column.
				\item  Solve the score equations using the cyclical coordinate descent:
				\begin{equation*}
				\bW_{11}\bbeta-\bs_{12}+\lambda\cdot\text{Sign}(\bbeta)=\mathbf{0}.
				\end{equation*}
				This gives a $(p-1) \times 1$ vector solution $\widehat{\bbeta}.$
				\item Update $\widehat{\bw}_{12}=\bW_{11}\widehat{\bbeta}$.
			\end{itemize}
			\STATE In the final cycle (for $i=1,\ldots,p$) solve for 
			\begin{equation*}
			\frac{1}{\widehat{\theta}_{22}}=w_{22}-\widehat{\bbeta}'\widehat{\bw}_{12}, \quad 
			\widehat{\btheta}_{12}=-\widehat{\theta}_{22}\widehat{\bbeta}.
			\end{equation*}
		\end{algorithmic}
	\end{algorithm}	
\end{spacing}
As was shown in \cite{GLASSO}, the estimator produced by Algorithm \ref{alg1a} is guaranteed to be positive definite. Furthermore, \cite{Sara2018} showed that Algorithm \ref{alg1a} is guaranteed to converge and produces consistent estimator of precision matrix under certain sparsity conditions. 
\subsection{Nodewise Regression}
An alternative approach to induce sparsity in the estimation of precision matrix in equation \eqref{eq62} is to solve for $\widehat{\bTheta}$ one column at a time via linear regressions, replacing population moments by their sample counterparts $\bS$. When we repeat this procedure for each variable $j=1,\ldots,p$, we will estimate the elements of $\widehat{\bTheta}$ column by column using $\{\be_t\}_{t=1}^{T}$ via $p$ linear regressions. \cite{meinshausen2006} use this approach (which we will refer to as MB) to incorporate sparsity into the estimation of the precision matrix. Instead of running $p$ coupled LASSO problems as in GLASSO, they fit $p$ separate LASSO regressions using each variable (node) as the response and the others as predictors to estimate $\widehat{\bTheta}$. This method is known as the \enquote{nodewise} regression and it is reviewed below based on \cite{Buhlmann2014} and \cite{Caner2019}.

Let $\be_j$ be a $T \times 1$ vector of observations for the $j$-th regressor, the remaining covariates are collected in a $T \times p$ matrix $\bE_{-j}$.  For each $j=1,\ldots,p$ we run the following Lasso regressions:
\begin{equation}\label{e21}
\widehat{\bgamma}_j= \argmin_{\bgamma\in \mathbb{R}^{p-1}}\Big(\norm{\be_j-\bE_{-j}\bgamma}_{2}^{2}/T+2\lambda_j\norm{\bgamma}_1\Big),
\end{equation}
where $\widehat{\bgamma}_j=\{\widehat{\gamma}_{j,k}; j=1,\ldots,p, k\neq j\}$ is a $(p-1)\times 1$ vector of the estimated regression coefficients that will be used to construct the estimate of the precision matrix, $\widehat{\bTheta}$. Define
\begin{equation}
\widehat{\bC}=\begin{pmatrix}
1&-\widehat{\gamma}_{1,2}&\cdots&-\widehat{\gamma}_{1,p}\\
-\widehat{\gamma}_{2,1}&1&\cdots&-\widehat{\gamma}_{2,p}\\
\vdots&\vdots&\ddots&\vdots\\
-\widehat{\gamma}_{p,1}&-\widehat{\gamma}_{p,2}&\cdots&1\\
\end{pmatrix}.
\end{equation} 
For $j=1,\ldots,p$, define
\begin{equation}
\hat{\tau}_{j}^{2} = \norm{\be_j-\bE_{-j}\widehat{\bgamma}_j}_{2}^{2}/T+\lambda_j\norm{\widehat{\bgamma}_j}_1
\end{equation}
and write
\begin{equation}
\widehat{\bT}^2 = \text{diag}(\hat{\tau}_{1}^{2},\ldots,\hat{\tau}_{p}^{2}).
\end{equation}
The approximate inverse is defined as
\begin{equation} \label{e25}
\widehat{\bTheta}_{\lambda_j} = \widehat{\bT}^{-2} \widehat{\bC}.
\end{equation}
Similarly to GLASSO, the subscript $\lambda_j$ in $\widehat{\bTheta}_{\lambda_j}$ means that the estimated $\bTheta$ will depend upon the choice of the tuning parameter: more details are provided in Subsection 4.1 which discusses how to choose shrinkage intensity in practice. The subscript is omitted to simplify the notation. The procedure to estimate the precision matrix using nodewise regression is summarized in Algorithm \ref{alg1b}.
\begin{spacing}{1.25}
\begin{algorithm}[H]
	\caption{Nodewise regression by \cite{meinshausen2006} (MB)}
	\label{alg1b}
	\begin{algorithmic}[1]
		\STATE Repeat for $j=1,\ldots,p$ :
		\begin{itemize}
			\item Estimate $\widehat{\bgamma}_j$ using \eqref{e21} for a given $\lambda_j$.
			\item  Select $\lambda_j$ using a suitable information criterion (see section 4.1 for the possible options).
		\end{itemize}
		\STATE Calculate $\widehat{\bC}$ and $\widehat{\bT}^2$ .
		\STATE Return $\widehat{\bTheta} = \widehat{\bT}^{-2}  \widehat{\bC}$.
	\end{algorithmic}
\end{algorithm}	
\end{spacing}
One of the caveats to keep in mind when using the MB method is that the estimator in \eqref{e25} is not self-adjoint. \cite{Caner2019} show (see their Lemma A.1) that $\widehat{\bTheta}$ in \eqref{e25} is positive definite with high probability, however, it could still occur that $\widehat{\bTheta}$ is not positive definite in finite samples. In such cases we use the matrix symmetrization procedure as in \cite{fan2018elliptical} and then use eigenvalue cleaning as in \cite{Callot2017} and \cite{Hautsch2012}.
	\section{Approximate Factor Models for Forecast Errors}
	The approximate factor models for the forecasts were first considered by \cite{StockChan1999}. They modeled a panel of ex-ante forecasts of a single time-series as a dynamic factor model and found out that the combined forecasts improved on individual ones when all forecasts have the same information set (up to difference in lags). This result emphasizes the benefit of forecast combination even when the individual forecasts are not based on different information and, therefore, do not broaden the information set used by any one forecaster.
	
	In this paper, we are interested in finding the combination of forecasts which yields the best out-of-sample performance in terms of the mean-squared forecast error. We claim that the forecasters use the same set of public information to make forecasts and hence they tend to make common mistakes. Figure \ref{fig1} illustrates this statement: it shows quarterly forecasts of Euro-area real GDP growth produced by the European Central Bank's Survey of Professional Forecasters from 1999Q3 to 2019Q3. As described in \cite{DIEBOLD2018}, forecasts are solicited for one year ahead of the latest available outcome: e.g., the 2007Q1 survey asked the respondents to forecast the GDP growth over 2006Q3-2007Q3. As evidenced from Figure \ref{fig1}, forecasters tend to jointly understate or overstate GDP growth, meaning that their forecast errors include common and idiosyncratic parts. Therefore, we can model the tendency of the forecast errors to move together via factor decomposition.

	Recall that we have $p$ competing forecasts of the univariate series $y_t$,  $t=1,\ldots,T$ and $\be_t=(e_{1t},\ldots,e_{pt})' \sim \mathcal{N} (\mathbf{0}, \bSigma)$ is a $p \times 1$ vector of forecast errors. Assume that the generating process for the forecast errors follows a $q$-factor model:
	\begin{align} \label{equ1}
	&\underbrace{\be_t}_{p \times 1}=\bB \underbrace{\bf_t}_{q\times 1}+ \ \bvarepsilon_t,\quad t=1,\ldots,T
	\end{align}
	where $\bf_t=(f_{1t},\ldots, f_{qt})'$ are the common factors of the forecast errors for $p$ models, $\bB$ is a $p \times q$ matrix of factor loadings, and $\bvarepsilon_t$ is the idiosyncratic component that cannot be explained by the common factors. Unobservable factors, $\bf_{t}$, and loadings, $\bB$, are usually estimated by the principal component analysis (PCA), studied in \cite{Connor1988,Bai2003,Bai2002,Stock2002}. Strict factor structure assumes that the idiosyncratic forecast error terms, $\bvarepsilon_{t}$, are uncorrelated with each other, whereas approximate factor structure allows correlation of the idiosyncratic components (\cite{Chamberlain}).

	We use the following notations: $\E{\bvarepsilon_t\bvarepsilon'_t}=\bSigma_{\varepsilon}$, $\E{\bf_t\bf'_t}=\bSigma_{f}$, $\E{\be_t\be'_t}=\bSigma=\bB\bSigma_{f}\bB'+ \bSigma_{\varepsilon}$, and $\E{\bvarepsilon_t|\bf_{t}}=0$.  Let $\bTheta=\bSigma^{-1}$, $\bTheta_{\varepsilon}=\bSigma_{\varepsilon}^{-1}$ and $\bTheta_{f}=\bSigma_{f}^{-1}$ be the precision matrices of forecast errors, idiosyncratic and common components respectively.
	The objective function to recover factors and loadings from \eqref{equ1} is:
	\begin{align} \label{eq7}
	&\min_{\bf_1,\ldots, \bf_T, \bB}\frac{1}{T}\sum_{t=1}^{T}(\be_t-\bB\bf_t)'(\be_t-\bB\bf_t)\\
	&\text{s.t.} \ \bB'\bB=\bI_q, \label{eq9}
	\end{align}
	where \eqref{eq9} is the assumption necessary for the unique identification of factors. Fixing the value of $\bB$, we can project forecast errors $\be_t$ into the space spanned by $\bB$: $\bf_t=(\bB'\bB)^{-1}\bB'\be_t=\bB'\be_t$. When combined with \eqref{eq7}, this yields a concentrated objective function for $\bB$:
	\begin{equation} \label{equ10}
	\max_{\bB} \ \text{tr}\Big[\bB'\Big(\frac{1}{T}\sum_{t=1}^{T}\be_t\be_t'\Big) \bB\Big].
	\end{equation}
	It is well-known (see \cite{Stock2002} among others) that $\widehat{\bB}$ estimated from the first $q$ eigenvectors of $\frac{1}{T}\sum_{t=1}^{T}\be_t\be_t'$ is the solution to \eqref{equ10}.
	Given a sample of the estimated residuals $\{\widehat{\bvarepsilon}_t=\be_t-\widehat{\bB}\widehat{\bf_t}\}_{t=1}^{T}$ and the estimated factors $\{\widehat{\bf}_t\}_{t=1}^{T}$, let $\widehat{\bSigma}_{\varepsilon} = (1/T)\sum_{t=1}^{T}\widehat{\bvarepsilon}_t\widehat{\bvarepsilon}_t'$ and $\widehat{\bSigma}_{f}=(1/T)\sum_{t=1}^{T}\widehat{\bf}_t\widehat{\bf}_t'$ be the sample counterparts of the covariance matrices.
	
	Moving forward to the forecast combination exercise, suppose we have $p$ competing forecasts, $\widehat{\by}_{t}=(\hat{y}_{1,t},\ldots,\hat{y}_{p,t})'$, of the variable $y_t$, $t=1,\ldots,T$. The forecast combination is defined as follows:
	\begin{align}
	\widehat{y}_{t}^{c}=\bw'\widehat{\by}_{t} \label{e14}
	\end{align}
	where $\bw$ is a $p \times 1$ vector of weights. Define a measure of risk $\text{MSFE}(\bw, \bSigma)=\bw'\bSigma\bw$. As shown in \cite{GrangerBatesWeights}, the \textit{optimal} forecast combination minimizes the variance of the combined forecast error:
	\begin{equation} \label{equ11}
	 \min_{\bw} \text{MSFE}=\min_{\bw}\E{\bw'\be_t\be^{'}_{t}\bw}=\min_{\bw} \bw'\bSigma\bw, \ \text{s.t.} \ \bw'\biota_p =1,
	\end{equation}
	where $\biota_p$ is a $p\times 1$ vector of ones. The solution to \eqref{equ11} yields a $p\times 1$ vector of the optimal forecast combination weights:
	\begin{equation} \label{eq13}
	\bw=\frac{\bTheta\biota_p}{\biota_p'\bTheta\biota_p}.
	\end{equation}
 	If the true precision matrix is known, the equation \eqref{eq13} guarantees to yield the optimal forecast combination. In reality, one has to estimate $\bTheta$. Hence, the out-of-sample performance of the combined forecast is affected by the estimation error. As pointed out by \cite{smith2009simple}, when the estimation uncertainty of the weights is taken into account, there is no guarantee that the \enquote{optimal} forecast combination will be better than the equal weights or even improve the individual forecasts. Define $a = \biota'_{p}\bTheta\biota_p/p$, and $\widehat{a} = \biota'_{p}\widehat{\bTheta}\biota_p/p$. We can write
	\begin{align}\label{e4.8}
	\abs{\frac{\text{MSFE}(\widehat{\bw},\widehat{\bSigma})}{\text{MSFE}(\bw,\bSigma)} -1} = \abs{ \frac{\hat{a}^{-1} }{a^{-1}}-1}=\frac{\abs{a-\hat{a}}}{\abs{\hat{a}}},
	\end{align}
	and
	\begin{align} \label{e4.9}
	\norm{\widehat{\bw}-\bw}_1\leq \frac{a\frac{\norm{(\widehat{\bTheta}-\bTheta)\biota_p}_1 }{p}+\abs{a-\widehat{a} }\frac{\norm{\bTheta\biota_p}_1}{p}}{\abs{\widehat{a}}a}.
	\end{align}
	Therefore, in order to control the estimation uncertainty in the MSFE and combination weights, one needs to obtain a consistent estimator of the precision matrix $\bTheta$. More details are discussed in Subsection 5.2 and Theorems \ref{theor1} and \ref{theor2}.
	
	\section{Factor Graphical Models for Forecast Errors}	
	Since our interest is in constructing weights for the forecast combination, our goal is to estimate a precision matrix of the forecast errors. However, as pointed out by \cite{koike2019biased}, when common factors are present across the forecast errors, the precision matrix cannot be sparse because all pairs of the forecast errors are partially correlated given other forecast errors through the common factors. To illustrate this point, we generated forecast errors that follow \eqref{equ1} with $q=2$ and $\varepsilon_t\sim \mathcal{N}(\mathbf{0},\bSigma_{\varepsilon})$, where $\sigma_{\varepsilon,ij}=0.4^{\abs{i-j}}$ is the $i,j$-th element of $\bSigma_{\varepsilon}$. The vector of factors $\bf_{t}$ is drawn from $\mathcal{N}(\mathbf{0},\bI_q/10)$, and the entries of the matrix of factor loadings for forecast error $j=1,\ldots,p$, $\bb_{j}$, are drawn from $\mathcal{N}(\mathbf{0},\bI_q/100)$. The full loading matrix is given by $\bB=(\bb_{1},\ldots,\bb_{p})'$. Let $\widehat{q}$ denote the number of factors estimated by the PCA. We set $(T,p)=(1000,50)$ and plot the heatmap and histogram of population partial correlations of forecast errors $\be_{t}$, which are the entries of a precision matrix, in \autoref{fig1a}. We now examine the performance of graphical models for estimating partial correlations under the factor structure. \autoref{fig1aa} shows the partial correlations estimated by GLASSO that does not take into account factors: due to strict sparsity imposed by graphical models almost all partial correlations are shrunk to zero which degenerates the histogram in \autoref{fig1aa}. This means that strong sparsity assumption on $\bTheta$ imposed by classical graphical models (such as GLASSO and nodewise regression from Algorithms \ref{alg1a}-\ref{alg1b}) is not realistic under the factor structure.

	 In order to avoid the aforementioned problem, instead of imposing sparsity assumption on the precision of forecast errors, $\bTheta$, we require sparsity of the precision matrix of the idiosyncratic errors, $\bTheta_{\varepsilon}$. The latter is obtained using the estimated residuals after removing the co-movements induced by the factors (see \cite{Brownlees2018EJS,Brownlees2018JAE,koike2019biased}). Naturally, once we condition on the common components, it is sensible to assume that many remaining partial correlations of $\bvarepsilon_{t}$ will be negligible and thus $\bTheta_{\varepsilon}$ is sparse.

	We use the weighted Graphical Lasso and nodewise regression as shrinkage techniques to estimate the precision matrix of residuals. Once the precision of the low-rank component is obtained, we use the Sherman-Morrison-Woodbury formula to estimate the precision of forecast errors:
	\begin{equation}\label{equa18}
	\bTheta=\bTheta_{\varepsilon}-\bTheta_{\varepsilon}\bB\lbrack\bTheta_f+\bB'\bTheta_{\varepsilon}\bB\rbrack^{-1}\bB'\bTheta_{\varepsilon}.
	\end{equation}
	To obtain $\widehat{\bTheta}_{f}=\widehat{\bSigma}_{f}^{-1}$, we use $\widehat{\bSigma}_{f}=\frac{1}{T}\sum_{t=1}^{T}\widehat{\bf}_{t}\widehat{\bf}_{t}^{'}$. To get $\widehat{\bTheta}_{\varepsilon}$, we develop two approaches: the first uses the weighted GLASSO Algorithm \ref{alg1a}, with the initial estimate of the covariance matrix of the idiosyncratic errors calculated as $\widehat{\bSigma}_{\varepsilon}=\frac{1}{T}\sum_{t=1}^{T}\widehat{\bvarepsilon}_{t}\widehat{\bvarepsilon}_{t}^{'}$, where $\widehat{\bvarepsilon}_{t}=\be_t-\widehat{\bB}\widehat{\bf}_{t}$. The second uses nodewise regression and applies Algorithm \ref{alg1b} to $\widehat{\bvarepsilon}_t$. Once we estimate $\widehat{\bTheta}_{f}$ and $\widehat{\bTheta}_{\varepsilon}$, we can get $\widehat{\bTheta}$ using a sample analogue of \eqref{equa18}.
	We call the proposed procedures \textit{Factor Graphical Lasso} and \textit{Factor nodewise regression}  and summarize them in Algorithm \ref{alg2} and Algorithm \ref{alg3} respectively.	
	\begin{spacing}{1.25}
		\begin{algorithm}[H]
		\caption{Factor Graphical Lasso (Factor GLASSO)}
		\label{alg2}
		\begin{algorithmic}[1]
			\STATE Estimate factors, $\widehat{\bf}_t$, and factor loadings, $\widehat{\bB}$, using PCA. Obtain $\widehat{\bSigma}_{f}=\frac{1}{T}\sum_{t=1}^{T}\widehat{\bf}_{t}\widehat{\bf}_{t}^{'}$, $\widehat{\bTheta}_{f}=\widehat{\bSigma}_{f}^{-1}$, $\widehat{\bvarepsilon}_t = \be_t-\widehat{\bB}\widehat{\bf_t}$, and $\widehat{\bSigma}_{\varepsilon}=\frac{1}{T}\sum_{t=1}^{T}\widehat{\bvarepsilon}_{t}\widehat{\bvarepsilon}_{t}^{'}$.
	
			\STATE Estimate a sparse $\bTheta_{\varepsilon}$ using the weighted Graphical Lasso in \eqref{eq62} initialized with $\bW_{\varepsilon}=\widehat{\bSigma}_{\varepsilon}+\lambda\bI$:
			\begin{align} \label{e7.6}
			&\widehat{\bTheta}_{\varepsilon,\lambda}=\argmin_{\bTheta_{\varepsilon}=\bTheta'_{\varepsilon}}\text{trace}(\bW_{\varepsilon}\bTheta_{\varepsilon})-\log\det(\bTheta_{\varepsilon})+\lambda\sum_{i\neq j}\widehat{d}_{\varepsilon,ii}\widehat{d}_{\varepsilon,jj}\abs{\theta_{\varepsilon,ij}}.
			\end{align}
			to get $\widehat{\bTheta}_{\varepsilon}$.
			
			\STATE Use $\widehat{\bTheta}_f$ from Step 1 and $\widehat{\bTheta}_{\varepsilon}$ from Step 2 to estimate $\bTheta$ using the sample counterpart of the Sherman-Morrison-Woodbury formula in \eqref{equa18}:
			\begin{equation}\label{3.11}
			\widehat{\bTheta}=\widehat{\bTheta}_{\varepsilon}-\widehat{\bTheta}_{\varepsilon}\widehat{\bB}\lbrack\widehat{\bTheta}_f+\widehat{\bB}'\widehat{\bTheta}_{\varepsilon}\widehat{\bB}\rbrack^{-1}\widehat{\bB}'\widehat{\bTheta}_{\varepsilon}.
			\end{equation}

		\end{algorithmic}
	\end{algorithm}	
\end{spacing}
	\begin{spacing}{1.25}
		\begin{algorithm}[H]
	\caption{Factor nodewise regression \cite{meinshausen2006} (Factor MB)}
	\label{alg3}
	\begin{algorithmic}[1]
		\STATE Estimate factors, $\widehat{\bf}_t$, and factor loadings, $\widehat{\bB}$, using PCA. Obtain $\widehat{\bSigma}_{f}=\frac{1}{T}\sum_{t=1}^{T}\widehat{\bf}_{t}\widehat{\bf}_{t}^{'}$, $\widehat{\bTheta}_{f}=\widehat{\bSigma}_{f}^{-1}$, and $\widehat{\bvarepsilon}_t = \be_t-\widehat{\bB}\widehat{\bf_t}$.
		
		\STATE Estimate a sparse $\bTheta_{\varepsilon}$ using nodewise regression: let $\widehat{\bvarepsilon}_j$ be a $T\times 1$ vector of observations for the $j$-th regressor, and $\widehat{\bUpsilon}_{-j}$ is a $T \times p$ matrix that collects the remaining covariates. Run LASSO regressions in \eqref{e21} for $\widehat{\bvarepsilon}_t$:
			\begin{equation}\label{e21a}
		\widehat{\bgamma}_{\varepsilon,j}= \argmin_{\bgamma_{\varepsilon}\in \mathbb{R}^{p-1}}\Big(\norm{\widehat{\bvarepsilon}_j-\widehat{\bUpsilon}_{-j}\bgamma_{\varepsilon}}_{2}^{2}/T+2\lambda_j\norm{\bgamma_{\varepsilon}}_1\Big),
		\end{equation}
		to get $\widehat{\bTheta}_{\varepsilon}$.
		
	\STATE Use $\widehat{\bTheta}_f$ from Step 1 and $\widehat{\bTheta}_{\varepsilon}$ from Step 2 to estimate $\bTheta$ using the sample counterpart of the Sherman-Morrison-Woodbury formula in \eqref{equa18}:
	\begin{equation}\label{3.11a}
	\widehat{\bTheta}=\widehat{\bTheta}_{\varepsilon}-\widehat{\bTheta}_{\varepsilon}\widehat{\bB}\lbrack\widehat{\bTheta}_f+\widehat{\bB}'\widehat{\bTheta}_{\varepsilon}\widehat{\bB}\rbrack^{-1}\widehat{\bB}'\widehat{\bTheta}_{\varepsilon}.
	\end{equation}
	\end{algorithmic}
\end{algorithm}	
\end{spacing}
	Note that Algorithms \ref{alg2} and \ref{alg3} involve the tuning parameters $\lambda$ and $\lambda_j$, the procedure on how to choose the
shrinkage intensity coefficients is described in more detail in Subsection 4.1 that describes how to choose the shrinkage intensity in practice, and Section 5 that establishes sparsity requirements that guarantee convergence of \eqref{e7.6}, \eqref{3.11}, \eqref{e21a}, and \eqref{3.11a}.

	We can use $\widehat{\bTheta}$ to estimate the forecast combination weights $\widehat{\bw}$
	\begin{equation} \label{eq35}
	\widehat{\bw}=\frac{\widehat{\bTheta}\biota_p}{\biota_p'\widehat{\bTheta}\biota_p},
	\end{equation}
	where $\widehat{\bTheta}$ is obtained from Algorithm \ref{alg2} or Algorithm \ref{alg3}. Let us now revisit the motivating example at the beginning of this section: Figures \ref{fig2a}-\ref{fig4a} plot the heatmaps and the estimated partial correlations when precision matrix is computed using Factor GLASSO in Algorithm \autoref{alg2} with $\widehat{q}\in\{1,2,3\}$ statistical factors. The heatmaps and histograms closely resemble population counterparts in Figure \ref{fig1a}, and the result is not very sensitive to over- or under-estimating the number of factors $\widehat{q}$. This demonstrates that using a combination of classical graphical models and factor structure via Factor Graphical Models in Algorithms \ref{alg2}-\ref{alg3} improves upon the performance of classical graphical models: our approach allows to extract the benefits of modeling common movements in forecast errors, captured by a factor model, and the benefits of using many competing forecasting models that give rise to a high-dimensional precision matrix, captured by a graphical model.
	\subsection{The Choice of the Tuning Parameters for FGM}
	Algorithms \ref{alg2}-\ref{alg3} require the tuning parameters $\lambda$ (from Algorithm \ref{alg1a}) and $\lambda_j$ (from Algorithm \ref{alg1b}) respectively. We now comment on the choices for both tuning parameters.
	
	To motivate the choice of the tuning parameter for GLASSO and Factor GLASSO, we first briefly discuss some of the existing options to motivate our choice of $\lambda$ in \eqref{eq62} in simulations and the empirical application. Usually $\lambda$ is selected from a grid of values $F_{\lambda}=(\lambda_{\text{min}},\ldots,\lambda_{\text{max}})$ which minimizes the score measuring the goodness-of-fit. Some popular examples include  multifold cross-validation (CV), Stability Approach to Regularization Selection (STARS, \cite{STARS}), and the Extended Bayesian Information Criteria (EBIC, \cite{EBIC}). Since we are interested in estimating a sparse high-dimensional precision matrix, we need to choose a method for selecting the tuning parameter which is consistent in high-dimensions. \cite{MeinshausenCV} suggest that CV performs poorly for high-dimensional data, it overfits (\cite{STARS}), and it does not consistently select models. \cite{Zhu_STARS} pointed out that the STARS is not computationally efficient. It is consistent under certain conditions, but suffers from the problem of overselection in estimating Gaussian graphical models. In contrast, EBIC is computationally efficient and is considered to be the state-of-the-art technique for choosing the tuning parameter for the undirected graphs. The score measuring the goodness of fit for EBIC can be written as:
	\begin{align}\label{e109}
	\lambda_{\text{EBIC}}=\argmin_{\lambda \in F_{\lambda}}\{-2l(\bTheta_{\varepsilon,\lambda})+\log(T)\text{df}(\bTheta_{\varepsilon,\lambda})+4\text{df}(\bTheta_{\varepsilon,\lambda})\log(p)\eta	\},
	\end{align}
	where $\eta\in \lbrack0,1\rbrack$, $\bTheta_{\varepsilon,\lambda}$ is the precision matrix estimated for the tuning parameter $\lambda\in F_{\lambda}$, and the log-likelihood is  $l(\bTheta_{\varepsilon,\lambda})=\log\det(\bTheta_{\varepsilon,\lambda})-\text{trace}(\bW_{\varepsilon}\bTheta_{\varepsilon})$. For the estimation of graphical models, the degrees of freedom are usually defined as the number of unique non-zero elements in the estimated precision matrix, $\text{df}(\bTheta_{\varepsilon,\lambda})=\sum_{i\leq j} I_{\bTheta_{\varepsilon,\lambda,i,j}\neq 0}$. \cite{Chen_EBIC} showed that when $\eta=1$, EBIC is consistent as long as the dimension $p$ does not grow exponentially with the sample size $T$. Hence, in our simulations and the empirical exercise we use EBIC with $\eta=1$ for GLASSO and Factor GLASSO in Algorithms \ref{alg1a} and \ref{alg2}.
	
	For Algorithms \ref{alg1b} and \ref{alg3}, we follow \cite{Caner2019} to choose $\lambda_j$ in \eqref{e21} by minimizing the generalized information criterion (GIC). Let $\abs{\widehat{S}_j(\lambda_j)}$ denote the estimated number of nonzero parameters in the vector $\widehat{\bgamma}_{\varepsilon,j}$:
	\begin{equation}
	\text{GIC}(\lambda_j) = \log\Big(\norm{\widehat{\bvarepsilon}_j-\widehat{\bUpsilon}_{-j}\bgamma_{\varepsilon}}_{2}^{2}/T\Big)+\abs{\widehat{S}_j(\lambda_j)}\frac{\log(p)}{T}\log(\log(T)).
	\end{equation}
	As pointed out by \cite{Caner2019}, the GIC selects the true model with probability approaching one both when $p>T$ and when $p\leq T$.
	\section{Asymptotic Properties}
	We first introduce some terminology and notations. Let $A\in \mathcal{S}_p$. Define the following set for $j=1,\ldots,p$:
	\begin{align}\label{equ84}
	&D_j(A)\defeq\{ i:A_{ij}\neq 0,\ i\neq j\}, \quad d_j(A)\defeq\text{card}(D_j(A)),\quad d(A)\defeq\max_{j=1,\ldots,p}d_j(A),
	\end{align}
	where $d_j(A)$ is the number of edges adjacent to the vertex $j$ (i.e., the \textit{degree} of vertex $j$), and $d(A)$ measures the maximum vertex degree. Define $S(A)\defeq \bigcup_{j=1}^{p}D_j(A)$ to be the overall off-diagonal sparsity pattern, and $s(A)\defeq \sum_{j=1}^{p}d_j(A)$ is the overall number of edges contained in the graph. Note that $\text{card}(S(A)) \leq s(A)$: when $s(A)=p(p-1)/2$ this would give a fully connected graph.
	
	For the nodewise regression in \eqref{e21a}, denote $D_j\defeq \{ k; \gamma_{j,k}\neq 0 \}$ to be the active set for row $\bgamma_{j}$, and let $d_j\defeq \abs{D_j}$. Define $\bar{d}\defeq \max_{1\leq j\leq p}d_j$.
	\subsection{Assumptions}
	 We now list the assumptions on the model \eqref{equ1}:
	\begin{enumerate}[\textbf{({A}.1)}]
		\item \label{A1} (Spiked covariance model)
		As $p \rightarrow \infty$, $\Lambda_1(\bSigma)>\Lambda_2(\bSigma)+\ldots>\Lambda_q(\bSigma)\gg \Lambda_{q+1}(\bSigma)\geq \ldots \geq \Lambda_p(\bSigma) > 0$, where $\Lambda_j(\bSigma)=\mathcal{O}(p)$ for $j \leq q$, while the non-spiked eigenvalues are bounded, $\Lambda_j(\bSigma)=o(p)$ for $j > q$. We further require that $\Lambda_1(\bSigma)$ is uniformly bounded away from infinity.
	\end{enumerate}
	\begin{enumerate}[\textbf{({A}.2)}]
		\item \label{A2}(Pervasive factors)
		There exists a positive definite $q \times q$ matrix $\breve{\bB}$ such that\\ $\vertiii{p^{-1}\bB'\bB-\breve{\bB}}_{2}\rightarrow 0$ and $\Lambda_{\text{min}}(\breve{\bB})^{-1}=\mathcal{O}(1)$  as $p \rightarrow \infty$.
	\end{enumerate}
	We also impose strong mixing condition. Let $\mathcal{F}_{-\infty}^{0}$ and $\mathcal{F}_{T}^{\infty}$ denote the $\sigma$-algebras that are generated by $\{(\bf_t,\bvarepsilon_{t}):t\leq 0\}$ and $\{(\bf_t,\bvarepsilon_{t}):t\geq T\}$ respectively. Define the mixing coefficient
	\begin{equation}
	\alpha(T)=\sup_{A\in \mathcal{F}_{-\infty}^{0}, B \in \mathcal{F}_{T}^{\infty}}\abs{\Pr{A}\Pr{B}-\Pr{AB}}.
	\end{equation}
	\begin{enumerate}[\textbf{({A}.3)}]
	\item \label{A3} (Strong mixing) There exists $r_3>0$ such that $3r_{1}^{-1}+1.5r_{2}^{-1}+3r_{3}^{-1}>1$, and $C>0$ satisfying, for all $T\in \mathbb{Z}^{+}$, $\alpha(T)\leq \exp (-CT^{r_3})$.
	\end{enumerate}	
	Assumption \ref{A1} divides the eigenvalues into the diverging and bounded ones. This assumption is satisfied by the factor model with pervasive factors, which is stated in Assumption \ref{A2}. We say that a factor is pervasive in the sense that it has non-negligible effect on a non-vanishing proportion of individual time-series. Assumptions \ref{A1}-\ref{A2} are crucial for estimating a high-dimensional factor model: they ensure that the space spanned by the principal components in the population level $\bSigma$ is close to the space spanned by the columns of the factor loading matrix $\bB$. Assumption \ref{A3} is a technical condition which is needed to consistently estimate the factors and loadings.
	
	Let $\bSigma=\bGamma\bLambda\bGamma^{'}$, where $\bSigma$ is the covariance matrix of returns that follow factor structure described in equation \eqref{equ1}. Define $\widehat{\bSigma}, \widehat{\bLambda}_q,\widehat{\bGamma}_q$ to be the estimators of $\bSigma,\bLambda,\bGamma$. We further let $\widehat{\bLambda}_q=\text{diag}(\hat{\lambda}_1,\ldots,\hat{\lambda}_q)$ and $\widehat{\bGamma}_q=(\hat{v}_1,\ldots,\hat{v}_q)$ to be constructed by the first $q$ leading empirical eigenvalues and the corresponding eigenvectors of $\widehat{\bSigma}$ and $\widehat{\bB}\widehat{\bB}'=\widehat{\bGamma}_q\widehat{\bLambda}_q\widehat{\bGamma}_{q}^{'}$. Similarly to \cite{fan2018elliptical}, we require the following bounds on the componentwise maximums of the estimators:
	\begin{enumerate}[\textbf{({B}.1)}]
		\item \label{B1} $\norm{\widehat{\bSigma}-\bSigma}_{\text{max}}=\mathcal{O}_P(\sqrt{\log p/T})$,
	\end{enumerate}
	
	\begin{enumerate}[\textbf{({B}.2)}]
		\item \label{B2} $\norm{(\widehat{\bLambda}_q-\bLambda)\bLambda^{-1}}_{\text{max}}=\mathcal{O}_P(\sqrt{\log p/T})$,
	\end{enumerate}
	\begin{enumerate}[\textbf{({B}.3)}]
		\item \label{B3} $\norm{\widehat{\bGamma}_q-\bGamma}_{\text{max}}=\mathcal{O}_P(\sqrt{\log p/(Tp)})$.
	\end{enumerate}
	Assumptions \ref{B1}-\ref{B3} are needed in order to ensure that the first $q$ principal components are approximately the same as the columns of the factor loadings. The estimator $\widehat{\bSigma}$ can be thought of as any ``pilot" estimator that satisfies \ref{B1}. For sub-Gaussian distributions, sample covariance matrix, its eigenvectors and eigenvalues satisfy \ref{B1}-\ref{B3}.
	
	In addition, the following structural assumptions on the model are imposed:
	\begin{enumerate}[\textbf{({C}.1)}]
	\item \label{C1} $\norm{\bSigma}_{\text{max}}=\mathcal{O}(1)$ and $\norm{\bB}_{\text{max}}=\mathcal{O}(1)$.
	\end{enumerate}
\subsection{Convergence of Forecast Combination Weights and MSFE}
To study the properties of the combination weights in \eqref{eq35} and MSFE, we first need to establish the convergence properties of precision matrix produced by Algorithms \ref{alg2}-\ref{alg3}. Let $\omega_{T}\defeq \sqrt{\log p/T} +1/\sqrt{p}$. Also, let $s(\bTheta_{\varepsilon})=\mathcal{O}_P(s_T)$ for some sequence $s_T\in(0,\infty)$ and $d(\bTheta_{\varepsilon})=\mathcal{O}_P(d_T)$ for some sequence $d_T\in(0,\infty)$. The deterministic sequences $s_T$ and $d_T$ will control the sparsity $\bTheta_{\varepsilon}$ for Factor GLASSO. Note that $d_T$ can be smaller than or equal to $s_T$. The reason why we distinguish between these two sequences is to juxtapose it with the sparsity conditions for the Factor MB, where we will only use the analogue of $d_T$ which was defined as $\bar{d}$ at the beginning of this section.

Let $\varrho_{1T}$ be a sequence of positive-valued random variables such that $\varrho_{1T}^{-1}\omega_{T}\xrightarrow{p}0$ and $\varrho_{1T}d_Ts_T\xrightarrow{\text{p}}0$, with $\lambda \asymp \omega_{T}$ (where $\lambda$ is the tuning parameter for the Factor GLASSO in \eqref{e7.6}). \cite{seregina2020FGL} show that under the Assumptions \ref{A1}-\ref{A3}, \ref{B1}-\ref{B3} and \ref{C1}, $\vertiii{\widehat{\bTheta}-\bTheta}_{1}=\mathcal{O}_P(\varrho_{1T}d_Ts_T)$ for Factor GLASSO. Furthermore, let $\varrho_{2T}$ be a sequence of positive-valued random variables such that $\varrho_{2T}^{-1}\omega_{T}\xrightarrow{p}0$ and $\varrho_{2T}\bar{d}^{2}\xrightarrow{\text{p}}0$, with $\lambda_j \asymp \omega_{T}$ (where $\lambda_j$ is the tuning parameter for Factor nodewise regression in \eqref{e21a}). \cite{seregina2020sparse} shows that under the Assumptions \ref{A1}-\ref{A3}, \ref{B1}-\ref{B3}, and \ref{C1}, we have $\vertiii{\widehat{\bTheta}-\bTheta}_1=\mathcal{O}_P(\varrho_{2T}\bar{d}^{2})$. It is interesting to compare the rates for precision matrix obtained by two factor graphical models: if $d_T=s_T$, the rates are similar, whereas if $d_T<s_T$ Factor MB is expected to converge faster. In fact, in high dimensions when $p>T$ and $\omega_{T} \simeq \sqrt{\log p/T}$, Factor MB achieves the minimax rate for this problem (see \cite{cai2016convergence} for the rate expression). 

Having established the convergence rates for precision matrix, we now study the properties of the combination weights and MSFE.
\begin{thm} \label{theor1}
	Assume \ref{A1}-\ref{A3}, \ref{B1}-\ref{B3}, and \ref{C1} hold. 
	\begin{enumerate}
		\item[(i)]If $\varrho_{1T}d_{T}^{2}s_T\xrightarrow{\text{p}}0$, Algorithm \ref{alg2} consistently estimates forecast combination weights in \eqref{eq35}: $\norm{\widehat{\bw}-\bw}_1=\mathcal{O}_P\Big(\varrho_{1T}d_{T}^2s_T\Big)=o_P(1)$.
		\item[(ii)]  If $\varrho_{2T}\bar{d}^{3}\xrightarrow{\text{p}}0$, Algorithm \ref{alg3} consistently estimates forecast combination weights in \eqref{eq35}: $\norm{\widehat{\bw}-\bw}_1=\mathcal{O}_P\Big(\varrho_{2T}\bar{d}^{3}\Big)=o_P(1)$.
	\end{enumerate}
\end{thm}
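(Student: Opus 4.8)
The plan is to start from the deterministic bound \eqref{e4.9}, which already reduces everything to controlling three quantities built from $\bTheta$ and $\widehat{\bTheta}$ — the perturbation $\norm{(\widehat{\bTheta}-\bTheta)\biota_p}_1/p$, the scalar gap $\abs{a-\widehat{a}}$, and the level $\norm{\bTheta\biota_p}_1/p$ — divided by $\widehat{a}a$. I would then feed in the precision-matrix rates recorded just before the theorem, $\vertiii{\widehat{\bTheta}-\bTheta}_1=\mathcal{O}_P(\varrho_{1T}d_Ts_T)$ for Algorithm \ref{alg2} and $\vertiii{\widehat{\bTheta}-\bTheta}_1=\mathcal{O}_P(\varrho_{2T}\bar{d}^{2})$ for Algorithm \ref{alg3}, and show that the single extra factor appearing in the stated rates ($d_T$ for (i), $\bar{d}$ for (ii)) is exactly the order of the level term $\norm{\bTheta\biota_p}_1/p$.

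First I would dispose of the two estimation-error pieces by reducing both to the operator norm that is already under control. The inequality $\norm{\bU\bv}_1\leq\vertiii{\bU}_1\norm{\bv}_1$ with $\norm{\biota_p}_1=p$ gives $\norm{(\widehat{\bTheta}-\bTheta)\biota_p}_1/p\leq\vertiii{\widehat{\bTheta}-\bTheta}_1$; and since $a-\widehat{a}=\biota_p'(\bTheta-\widehat{\bTheta})\biota_p/p$, Hölder's inequality with $\norm{\biota_p}_\infty=1$ yields $\abs{a-\widehat{a}}\leq\norm{(\widehat{\bTheta}-\bTheta)\biota_p}_1/p\leq\vertiii{\widehat{\bTheta}-\bTheta}_1$. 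Both pieces are therefore $\mathcal{O}_P(\varrho_{1T}d_Ts_T)$ (resp.\ $\mathcal{O}_P(\varrho_{2T}\bar{d}^{2})$).

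Next I would pin down the level term and the denominator, which is where the factor structure enters. For $\norm{\bTheta\biota_p}_1/p$ I would substitute the Sherman--Morrison--Woodbury representation \eqref{equa18}. The leading piece $\bTheta_\varepsilon\biota_p$ is the vector of row sums of the \emph{sparse} matrix $\bTheta_\varepsilon$; each row has at most $d_T$ (resp.\ $\bar{d}$) nonzero entries of bounded size under \ref{C1}, so $\norm{\bTheta_\varepsilon\biota_p}_1/p\leq\norm{\bTheta_\varepsilon\biota_p}_\infty=\mathcal{O}(d_T)$. For the low-rank correction I would use pervasiveness (Assumption \ref{A2}): since $\bB'\bB\asymp p$ while $\bTheta_f,\bTheta_\varepsilon$ have bounded eigenvalues, the $q\times q$ matrix $[\bTheta_f+\bB'\bTheta_\varepsilon\bB]^{-1}$ is of order $1/p$ in operator norm, and chaining $\ell_2$ bounds through the fixed-dimension factor block ($\vertiii{\bB}_2\asymp\sqrt{p}$, $\norm{\bb_i}_2=\mathcal{O}(1)$) leaves $\bTheta_\varepsilon\bB[\bTheta_f+\bB'\bTheta_\varepsilon\bB]^{-1}\bB'\bTheta_\varepsilon\biota_p$ with $\ell_\infty$-entries again $\mathcal{O}(d_T)$. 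Hence $\norm{\bTheta\biota_p}_1/p=\mathcal{O}(d_T)$ (resp.\ $\mathcal{O}(\bar{d})$); equivalently the oracle weights satisfy $\norm{\bw}_1=(\norm{\bTheta\biota_p}_1/p)/a=\mathcal{O}(d_T)$. For the denominator, $a\leq\Lambda_{\max}(\bTheta)=1/\Lambda_{\min}(\bSigma)\leq1/\Lambda_{\min}(\bSigma_\varepsilon)=\mathcal{O}(1)$ is immediate, and assuming $a$ is also bounded away from zero, the bound $\abs{a-\widehat{a}}=o_P(1)$ gives $\widehat{a}\asymp1$ with probability tending to one, so $1/(\widehat{a}a)=\mathcal{O}_P(1)$.

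Assembling \eqref{e4.9}, the first numerator term is $a\cdot\mathcal{O}_P(\varrho_{1T}d_Ts_T)=\mathcal{O}_P(\varrho_{1T}d_Ts_T)$ and the second is $\mathcal{O}_P(\varrho_{1T}d_Ts_T)\cdot\mathcal{O}(d_T)=\mathcal{O}_P(\varrho_{1T}d_T^2s_T)$, which dominates; dividing by $\widehat{a}a\asymp1$ gives $\norm{\widehat{\bw}-\bw}_1=\mathcal{O}_P(\varrho_{1T}d_T^2s_T)$, and the identical computation with $\bar{d}$ and $\varrho_{2T}\bar{d}^2$ in place of $d_T$ and $\varrho_{1T}d_Ts_T$ yields $\mathcal{O}_P(\varrho_{2T}\bar{d}^3)$ for (ii); the hypotheses $\varrho_{1T}d_T^2s_T\xrightarrow{p}0$ and $\varrho_{2T}\bar{d}^3\xrightarrow{p}0$ then give the $o_P(1)$ conclusions. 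I expect the main obstacle to be the lower bound $a\gtrsim1$: the crude eigenvalue argument only delivers $a\gtrsim1/\Lambda_{\max}(\bSigma)\asymp1/p$, and when $\biota_p$ aligns with a pervasive factor direction $a$ genuinely collapses to order $1/p$, in which case $\abs{a-\widehat{a}}$ can swamp $a$ and the $1/(\widehat{a}a)$ factor destroys the rate. Establishing non-degeneracy of the combination problem, $\biota_p'\bTheta\biota_p\asymp p$, is thus the crux on which both parts rest; I would either impose it as a maintained condition or obtain it from a quantitative lower bound on the part of $\biota_p$ orthogonal (in the $\bTheta_\varepsilon$ metric) to the column space of $\bB$.
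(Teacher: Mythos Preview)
Your approach is essentially the paper's: the same starting inequality \eqref{e4.9}, the same reduction of both estimation-error terms to $\vertiii{\widehat{\bTheta}-\bTheta}_1$ via H\"older, and the same Sherman--Morrison--Woodbury argument (sparsity of $\bTheta_\varepsilon$ plus pervasiveness collapsing the $q\times q$ block to order $1/p$) to obtain $\norm{\bTheta\biota_p}_1/p\leq\vertiii{\bTheta}_1=\mathcal{O}(d_T)$ (resp.\ $\mathcal{O}(\bar{d})$), which the paper packages as Lemma \ref{lemma1}(a). The one obstacle you flag --- the lower bound $a\geq C_0>0$ --- is dispatched in the paper (Lemma \ref{lemma1}(b)) simply by invoking the final clause of Assumption \ref{A1}, ``$\Lambda_1(\bSigma)$ is uniformly bounded away from infinity,'' which gives $\Lambda_{\min}(\bTheta)=1/\Lambda_1(\bSigma)\geq C_0$ and hence $a=\biota_p'\bTheta\biota_p/p\geq\Lambda_{\min}(\bTheta)\geq C_0$; your instinct that this clause sits uneasily with the spiked scaling $\Lambda_j(\bSigma)=\mathcal{O}(p)$ is well taken, but within the paper's stated assumptions the lower bound on $a$ is imposed rather than derived, so no further work is needed.
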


\begin{thm} \label{theor2}
	Assume \ref{A1}-\ref{A3}, \ref{B1}-\ref{B3}, and \ref{C1} hold. 
	\begin{enumerate}
		\item[(i)]If $\varrho_{1T}d_Ts_T\xrightarrow{\text{p}}0$, Algorithm \ref{alg2} consistently estimates $\text{MSFE}(\bw,\bSigma)$: $	\abs{\frac{\text{MSFE}(\widehat{\bw},\widehat{\bSigma})}{\text{MSFE}(\bw,\bSigma)} -1}=$\\$\mathcal{O}_P(\varrho_{1T}d_Ts_T )=o_P(1)$.
		\item[(ii)]  If $\varrho_{2T}\bar{d}^{2}\xrightarrow{\text{p}}0$, Algorithm \ref{alg3} consistently estimates  $\text{MSFE}(\bw,\bSigma)$: $	\abs{\frac{\text{MSFE}(\widehat{\bw},\widehat{\bSigma})}{\text{MSFE}(\bw,\bSigma)} -1}=\mathcal{O}_P(\varrho_{2T}\bar{d}^{2} )=o_P(1)$.
	\end{enumerate}
\end{thm}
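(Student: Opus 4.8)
The plan is to reduce everything to the exact identity already recorded in \eqref{e4.8}, namely
$$\abs{\frac{\text{MSFE}(\widehat{\bw},\widehat{\bSigma})}{\text{MSFE}(\bw,\bSigma)}-1}=\frac{\abs{a-\widehat{a}}}{\abs{\widehat{a}}},\qquad a=\biota'_p\bTheta\biota_p/p,\quad \widehat{a}=\biota'_p\widehat{\bTheta}\biota_p/p.$$
Both parts of the theorem then follow once I (a) control the numerator $\abs{a-\widehat{a}}$ by the already-established operator-norm rate for $\widehat{\bTheta}-\bTheta$, and (b) show that the denominator $\abs{\widehat{a}}$ stays bounded away from zero with probability tending to one. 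The two parts of the theorem differ only through which precision-matrix rate is plugged in.

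For the numerator I would write $\abs{a-\widehat{a}}=p^{-1}\abs{\biota'_p(\widehat{\bTheta}-\bTheta)\biota_p}$ and bound the quadratic form by the sum of absolute entries of $\bM\defeq\widehat{\bTheta}-\bTheta$. Since $\abs{\biota'_p\bM\biota_p}\leq\sum_{i,j}\abs{m_{ij}}\leq p\max_j\sum_i\abs{m_{ij}}=p\vertiii{\bM}_1$, this gives the clean bound $\abs{a-\widehat{a}}\leq\vertiii{\widehat{\bTheta}-\bTheta}_1$. I then invoke the rates stated just before the theorem: $\vertiii{\widehat{\bTheta}-\bTheta}_1=\mathcal{O}_P(\varrho_{1T}d_Ts_T)$ for Factor GLASSO (Algorithm \ref{alg2}) and $\vertiii{\widehat{\bTheta}-\bTheta}_1=\mathcal{O}_P(\varrho_{2T}\bar{d}^2)$ for Factor MB (Algorithm \ref{alg3}).

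The denominator is where the only genuine work lies. First I would bound $a$ away from zero using Assumption \ref{A1}: $\biota'_p\bTheta\biota_p\geq\Lambda_{\min}(\bTheta)\norm{\biota_p}_2^2=p/\Lambda_1(\bSigma)$, so $a\geq 1/\Lambda_1(\bSigma)\geq c>0$ for some constant $c$, because $\Lambda_1(\bSigma)$ is uniformly bounded away from infinity under \ref{A1}. Next, under the maintained hypotheses $\varrho_{1T}d_Ts_T\xrightarrow{p}0$ in part (i) and $\varrho_{2T}\bar{d}^2\xrightarrow{p}0$ in part (ii), the numerator bound already delivers $\abs{a-\widehat{a}}=o_P(1)$; hence $\widehat{a}\geq a-\abs{a-\widehat{a}}\geq c-o_P(1)$, so $\Pr(\widehat{a}<c/2)\to 0$ and therefore $1/\abs{\widehat{a}}=\mathcal{O}_P(1)$.

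Combining the three pieces gives $\abs{\text{MSFE}(\widehat{\bw},\widehat{\bSigma})/\text{MSFE}(\bw,\bSigma)-1}\leq\vertiii{\widehat{\bTheta}-\bTheta}_1/\abs{\widehat{a}}$, which equals $\mathcal{O}_P(\varrho_{1T}d_Ts_T)$ in part (i) and $\mathcal{O}_P(\varrho_{2T}\bar{d}^2)$ in part (ii), each $o_P(1)$ by assumption. I expect the denominator control to be the main (indeed the only non-mechanical) obstacle: it is the step that actually exploits the spiked-eigenvalue structure of \ref{A1} to pin $\widehat{a}$ away from zero, whereas the numerator is a routine application of H\"older's inequality together with the cited precision-matrix rate. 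This argument is a streamlined version of the one behind Theorem \ref{theor1}: in \eqref{e4.9} the weights carry an additional term proportional to $\norm{\bTheta\biota_p}_1/p=\mathcal{O}_P(\vertiii{\bTheta}_1)$, which contributes the extra factor $d_T$ (resp.\ $\bar{d}$) absent here.
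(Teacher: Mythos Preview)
Your proposal is correct and follows essentially the same route as the paper. The paper packages your steps (b) and the numerator bound into parts (b) and (c) of a preliminary Lemma, but the content is identical: it bounds $\abs{a-\widehat{a}}\leq\vertiii{\widehat{\bTheta}-\bTheta}_1$ via H\"older, bounds $a$ away from zero by the minimal-eigenvalue clause of Assumption~\ref{A1}, and then reads off the rate from \eqref{e4.8}; your explicit verification that $\abs{\widehat{a}}$ is bounded away from zero with probability tending to one is left implicit in the paper but is of course needed.
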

Proofs of Theorems \ref{theor1}-\ref{theor2} can be found in Section 9. Note that the rates of convergence for MSFE and precision matrix $\bTheta$ are the same and both are faster than the combination weight rates in Theorem \ref{theor1}. In contrast to classical graphical models in Algorithms \ref{alg1a}-\ref{alg1b}, the convergence properties of which were examined by \cite{Sara2018} among others, the rates in Theorems \ref{theor1}-\ref{theor2} depend on the sparsity of $\bTheta_{\varepsilon}$ rather than of $\bTheta$. This means that instead of assuming that many partial correlations of forecast errors $\be_{t}$ are negligible, which is not realistic under the factor structure, we impose a milder restriction requiring many partial correlations of $\bvarepsilon_{t}$ to be negligible once the common components have been taken into account. Similarly to the comparison of precision matrix $\bTheta$ obtained by two graphical models, if $d_T<s_T$ Factor MB is expected to converge faster for combination weights and MSFE. In our simulations the rates of Factor Graphical models are comparable, whereas an empirical application shows that for most macroeconomic series that we studied Factor GLASSO outperforms Factor MB. This suggests that for macroeconomic forecasting using weighted penalized log-likelihood and running $p$ coupled LASSO problems for estimating precision matrix is preferable to fitting $p$ separate LASSO regressions using each variable as the response and the others as predictors.
	\section{Monte Carlo}
	We divide the simulation results into two subsections. In the first subsection we study the consistency of the Factor GLASSO  and Factor MB for estimating precision matrix and the combination weights. In the second subsection we evaluate the out-of-sample forecasting performance of combined forecasts based on the Factor Graphical models from Algorithms \ref{alg2}-\ref{alg3} in terms of the mean-squared forecast error. We compare the performance of forecast combinations based on the factor models with equal-weighted (EW) forecast combination, forecast combinations using GLASSO and nodewise regression from Algorithms \ref{alg1a}-\ref{alg1b}. Similarly to the literature on graphical models, all exercises use 100 Monte Carlo simulations.
	\subsection{Consistent Estimation of forecast combination weights based on FGM}
	We consider sparse Gaussian graphical models which may be fully specified by a precision matrix $\bTheta_0$. Therefore, the random sample is distributed as $\be_t=(e_{1t},\ldots,e_{pt})' \sim \mathcal{N}(0,\bSigma_{0})$, where $\bTheta_0=(\bSigma_{0})^{-1}$  for $t=1,\ldots, T, \ j=1,\ldots, p$. Let $\widehat{\bTheta}$ be the precision matrix estimator. We show consistency of the Factor GLASSO (Algorithm \ref{alg2}) and Factor MB (Algorithm \ref{alg3}), in (i) the operator norm, $\vertiii{\widehat{\bTheta}-\bTheta_{0}}_{2}$, (ii) $\ell_{1}$/$\ell_{1}$-matrix norm, $\vertiii{\widehat{\bTheta}-\bTheta_{0}}_{1}$, and (iii) in $\ell_1$-vector norm for the combination weights, $\norm{\widehat{\bw}-\bw}_1$, where $\bw$ is given by \eqref{eq13}.
	
	The forecast errors are assumed to have the following structure:
	\begin{align} \label{e51}
	&\underbrace{\be_t}_{p \times 1}=\bB \underbrace{\bf_t}_{q\times 1}+ \ \bvarepsilon_t,\quad t=1,\ldots,T\\
	&\bf_{t}=\phi_f\bf_{t-1}+\bzeta_t,
	\end{align}
	where $\be_{t}$ is a $p \times 1$ vector of forecast errors following $\mathcal{N}(\bm{0},\bSigma)$, $\bf_{t}$ is a $q \times 1$ vector of factors, $\bB$ is a $p\times q$ matrix of factor loadings, $\phi_f$ is an autoregressive parameter in the factors which is a scalar for simplicity, $\bzeta_t$ is a $q \times 1$ random vector with each component independently following  $\mathcal{N}(0,\sigma^{2}_{\zeta})$, $\bvarepsilon_t$ is a $p \times 1$ random vector following $\mathcal{N}(0,\bSigma_{\varepsilon})$, with sparse $\bTheta_{\varepsilon}$ that has a random graph structure described below. To create $\bB$ in \eqref{e51} we take the first $q$ columns of an upper triangular matrix from a Cholesky decomposition of the $p \times p$ Toeplitz matrix parameterized by $\rho$: that is, $\bB = (b)_{ij}$, where $(b)_{ij}=\rho^{\abs{i-j}}$, $i,j\in \{1,\ldots,p\}$.
	 We set $\rho = 0.2$, $\phi_f = 0.2$ and $\sigma^{2}_{\zeta} = 1$. The specification in \eqref{e51} leads to the low-rank plus sparse decomposition of the covariance matrix:
	\begin{align}
	\E{\be_t\be'_t}=\bSigma=\bB\bSigma_{f}\bB'+ \bSigma_{\varepsilon}.
	\end{align}
	
\noindent	When $\bSigma_{\varepsilon}$ has a sparse inverse $\bTheta_{\varepsilon}$, it leads to the low-rank plus sparse decomposition of the precision matrix $\bTheta$, such that $\bTheta$ can be expressed as a function of the low-rank $\bTheta_{f}$ plus sparse $\bTheta_{\varepsilon}$.
	
	We consider the following setup: let $p = T^{\delta}$, $\delta = 0.85$, $q = 2(\log(T))^{0.5}$ and $T = \lbrack 2^{\kappa} \rbrack, \ \text{for} \ \kappa=7,7.5,8,\ldots,9.5$. Our setup allows the number of individual forecasts, $p$, and the number of common factors in the forecast errors, $q$, to increase with the sample size, $T$.
	
	 A sparse precision matrix of the idiosyncratic components $\bTheta_{\varepsilon}$ is constructed as follows: we first generate the adjacency matrix using a random graph structure. Define a $p \times p$ adjacency matrix $\bA_{\varepsilon}$ which represents the structure of the graph:
	\begin{align}
	a_{\varepsilon,ij}=\begin{cases}
	1, & \text{for} \ i\neq j\ \ \text{with probability $\pi$},\\
	0, & \text{otherwise,}
	\end{cases}
	\end{align}
	where $a_{\varepsilon,ij}$ denotes the $i,j$-th element of the adjacency matrix $\bA_{\varepsilon}$. We set $a_{\varepsilon,ij} = a_{\varepsilon,ji}=1, \ \text{for} \ i\neq j$ with probability $\pi$, and $0$ otherwise. Such structure results in $s_T = p(p-1)\pi/2$ edges in the graph. To control sparsity, we set $\pi = 1/(pT^{0.8})$, which makes $s_T = \mathcal{O}(T^{0.05})$. The adjacency matrix has all diagonal elements equal to zero. Hence, to obtain a positive definite precision matrix we apply the procedure described in \cite{HUGE}: using their notation, $\bTheta_{\varepsilon}=\bA_{\varepsilon}\cdot v+\bI(\abs{\tau}+0.1+u)$, where $u>0$ is a positive number added to the diagonal of the precision matrix to control the magnitude of partial correlations, $v$ controls the magnitude of partial correlations with $u$, and $\tau$ is the smallest eigenvalue of $\bA_{\varepsilon}\cdot v$. In our simulations we use $u=0.1$ and $v=0.3$.
	
	Figures \ref{f18}-\ref{f19} show the averaged (over Monte Carlo simulations) errors of the estimators of the precision matrix $\bTheta$ and the optimal combination weight versus the sample size $T$ in the logarithmic scale (base 2). The estimate of the precision matrix of the EW forecast combination is obtained using the fact that diagonal covariance and precision matrices imply equal weights. To determine the values of the diagonal elements we use the shrinkage intensity coefficient calculated as the average of the eigenvalues of the sample covariance matrix of the forecast errors (see \cite{Ledoit2004}). As evidenced by Figures \ref{f18}-\ref{f19}, Factor GLASSO and Factor MB demonstrate superior performance over EW and non-factor based models (GLASSO and MB). Furthermore, our method achieves lower estimation error in the combination weights \eqref{e4.9}, which leads to lower risk of the combined forecast as shown in \eqref{e4.8}. Interestingly, even though the precision matrix estimated using Factor MB has faster convergence rate in $\vertiii{\cdot}_2$ and $\vertiii{\cdot}_1$ norms as compared to Factor GLASSO, the weights estimated using Factor GLASSO converge faster. Also, note that the precision matrix estimated using the EW method also shows good convergence properties. However, in terms of estimating the combination weight, the performance of EW does not exhibit convergence properties. This is in agreement with previously reported findings (\cite{smith2009simple}) that equal weights are not theoretically optimal, however, as demonstrated in the next subsection, the EW combination still leads to a relatively good performance in terms of MSFE although the FGM-based combinations outperform it. 
	\subsection{Comparing Performance of forecast combinations based on FGM}	
We consider the standard forecasting model in the literature (e.g., \cite{Stock2002}), which uses the factor structure of the high dimensional predictors.	
Suppose the data is generated from the following data generating process (DGP):
\begin{align}
&\bx_t=\bLambda\bg_t+\bv_t,\label{e39}\\
&\bg_{t}=\phi\bg_{t-1}+\bxi_t,\label{e40}\\
&y_{t+1}=\bg'_t\balpha+\sum_{s=1}^{\infty}\theta_{s}\epsilon_{t+1-s}+\epsilon_{t+1},\label{e48}
\end{align}
where $y_{t+1}$ is a univariate series of our interest in forecasting, $\bx_t$ is an $N \times 1$ vector of regressors (predictors), $\bbeta$ is an $N \times 1$ parameter vector, $\bg_{t}$ is an $r \times 1$ vector of factors, $\bLambda$ is an $N\times r$ matrix of factor loadings, $\bv_t$ is an $N \times 1$ random vector following $\mathcal{N}(0,\sigma^{2}_{v})$, $\phi$ is an autoregressive parameter in the factors which is a scalar for simplicity, $\bxi_t$ is an $r \times 1$ random vector with each component independently following $\mathcal{N}(0,\sigma^{2}_{\xi})$, $\epsilon_{t+1}$ is a random error following $\mathcal{N}(0,\sigma^{2}_{\epsilon})$, and $\balpha$ is an $r\times 1$ parameter vector which is drawn randomly from $\mathcal{N}(1,1)$. We set $\sigma_{\epsilon}=1$. The coefficients $\theta_{s}$ are set according to the rule 
\begin{align} \label{e41}
\theta_{s}=(1+s)^{c_1}c_{2}^{s}, 
\end{align}
as in \cite{HANSEN2008}. We set $c_1\in\{0,0.75\}$ and $c_2 \in \{0.6, 0.7, 0.8, 0.9\}$. We generate $r$ factors using \eqref{e40} with a grid of 10 different AR(1) coefficients $\phi$ equidistant between $0$ and $0.9$. To create $\bLambda$ in \eqref{e39} we take the first $r$ rows of an upper triangular matrix from a Cholesky decomposition of the $N \times N$ Toeplitz matrix parameterized by $\rho$. We consider a grid of 10 different values of $\rho$ equidistant between $0$ and $0.9$.

One-step ahead forecasts are estimated from the factor-augmented autoregressive (FAR) models of orders $k,l$, denoted as  FAR($k,l$):
\begin{align}\label{e42}
&\hat{y}_{t+1}=\hat{\mu}
+\hat{\kappa}_1\hat{g}_{1,t}+\cdots+ \hat{\kappa}_k\hat{g}_{k,t}
+\hat{\psi}_1y_{t}+\cdots+\hat{\psi}_ly_{t+1-l},
\end{align}
where the factors $({\hat{g}_{1,t},\ldots,\hat{g}_{k,t}})$ are estimated from equation (\ref{e39}).
We consider the FAR models of various orders, with $k=1,\ldots,K$ and $l=1,\ldots,L$.
We also consider the models without any lagged $y$ or any factors. Therefore, the total number of forecasting models is $p \equiv (1+K)\times(1+L)$, which includes the forecasting models using naive average or no factors.

The total number of observations is $T$, and the number of observations in the regression period (the train sample) is set to be the first half of the sample, $t=1,\ldots,m \equiv T/2$, to leave the second half of the sample, $t=m+1,\ldots,T$, for the out-of-sample evaluation (the test sample). We roll the estimation window over the test sample of the size $n\equiv T-m$, to update all the estimates in each point of time $t=1,\ldots,m$. Recall that $q$ denotes the number of factors in the forecast errors as in equation (\ref{equ1}). We first examine the properties of the combined forecasts based on the Factor Graphical models when $T$ and $p$ vary and compare their performance with the combined forecasts based on the GLASSO, MB and EW forecasts.

We consider a low-dimensional setup to demonstrate the advantage of using FGM even when the number of forecasts, $p$, is small relative to the sample size, $T$: (1) in such scenario EW has an advantage since there are not many models to combine and assigning equal weights should produce satisfactory performance, and (2) non-factor based models have the advantage over the models that estimate factors due to the estimation errors. As a result, this framework with the low-dimensional setup is favorable to EW and non-factor based models. Figure \ref{fig2} shows the MSFE for different sample sizes and fixed parameters: we report the results for two values of $c_1\in\{0,0.75\}$. As evidenced from Figure \ref{fig2}, the models that use the factor structure outperform EW combination and non-factor based counterparts for both values of $c_1$. We see that Factor GLASSO, in general, has lower MSFE than Factor MB. This finding is further supported by our empirical application in Section 7.

In Appendix \ref{appendixA} we examine the sensitivity of the competing models with respect to variation in the DGP parameters such as number of predictors $N$, values of $c_2$, $\phi$, the strength of factor loadings $\rho$, and the number of factors $q$. We conclude that Factor Graphical Models outperform equally-weighted combinations and the graphical models without factors.

	\section{Application of FGM for Macroeconomic Forecasting}
	 An empirical application to forecasting macroeconomic time series in big data environment highlights the advantage of both Factor Graphical models described in Algorithms \ref{alg2}-\ref{alg3} in comparison with the existing methods of forecast combination. We use a large monthly frequency macroeconomic database of \cite{DataMcCracken}, who provide a comprehensive description of the dataset and 128 macroeconomic series. We consider the time period 1960:01-2020:07 with the total number of observations $T=726$, the training sample consists of $m=120$ observations, and the test sample $n\equiv T-m-h+1$, where $h$ is the forecast horizon. We roll the estimation window over the test sample to update all the estimates in each point of time $t=m, \ldots,T-h$. We estimate $h$-step ahead forecasts from FAR($k,l$) which were defined in \eqref{e42} with $k=0, 1, \ldots, K=9$, and $l=0, 1,\ldots, L=11$. The total number of forecasting models is $p=120$. The optimal number of factors in the forecast errors (denoted as $q$ in equation \eqref{equ1}) is chosen using the standard data-driven method that uses the information criterion IC1 described in \cite{Bai2002}. We note that in the majority of the cases the optimal number of factors was estimated to be equal to 1.
	 
	 Table \ref{tab1} compares the performance of the Factor GLASSO and Factor MB with the competitors for predicting seven representative macroeconomic indicators of the US economy: monthly industrial production (INDPRO), S\&P500 composite index (S\&P500), Consumer Price Index (CPIAUCSL), real personal consumption (DPCERA3MO86SBEA), M1 money stock (M1SL), civilian unemployment rate (UNRATE), and the effective federal funds rate (FEDFUNDS) using 127 remaining macroeconomic series. Let $\{Y_t\}_{t=1}^{T}$ be the series of interest for forecasting. Similarly to \cite{stevanovic2020machine}, for INDPROD, S\&P500, CPI, Real Personal Consumption and M1 Money Stock we forecast the average growth rate (with logs):
	\begin{equation}
	y_{t+h} = \frac{1}{h} \ln (Y_{t+h}/Y_t).
	\end{equation}
	  For UNRATE we forecast the average change (without logs):
	\begin{equation}
	y_{t+h} = \frac{1}{h} (Y_{t+h}/Y_t).
	\end{equation}	  
	  And for FEDFUNDS we forecast the log of the series:
	  	\begin{equation}
	  y_{t+h} = \ln (Y_{t+h}).
	  \end{equation}
	  	  
	   Table \ref{tab1} reports MSFEs of the competing methods with the smallest MSFE in each row in bold font. As evidenced from Table \ref{tab1}, our methods outperform EW, GLASSO and nodewise regression: accounting for the factor structure results in lower MSFE. Therefore, the FGM framework developed in this paper leads to the superior performance of the combined forecast as compared to EW model even when the models/experts do not contain a lot of unique information. Our empirical application demonstrates that this finding does not originate from the difference in the performance of EW vs graphical models: as evidenced from Table \ref{tab1}, the performance of GLASSO is worse than that of EW for the FEDFUNDS series, whereas Factor GLASSO outperforms EW. A similar pattern is observed in the performance of nodewise regression for M1 Money Stock. Therefore, the improvement in the combined forecast comes from incorporating the factor structure of the forecast errors into the graphical models. Note that in contrast with EW and non-factor based methods, the performance of Factor GLASSO and Factor MB does not deteriorate significantly when the forecast horizon, $h$, increases. Notice, however, that Factor Graphical Models tend to perform better for $h\geq 2$. In other words, accounting for common factors in forecast errors has greater benefit for longer horizons. Finally, for most series Factor GLASSO outperforms Factor MB, suggesting that for macroeconomic forecasting using weighted penalized log-likelihood and running $p$ coupled LASSO problems for estimating precision matrix is preferable to fitting $p$ separate LASSO regressions using each variable as the response and the others as predictors.

	\section{Conclusions}
	In this paper we overcome the challenge of using graphical models under the factor structure and  provide a simple framework that allows practitioners to combine a large number of forecasts when experts tend to make common mistakes. Our new approach to forecast combinations breaks down forecast errors into common and unique parts which improves the accuracy of the combined forecast. The proposed algorithms, Factor Graphical Models, are shown to consistently estimate forecast combination weights and MSFE. Extensive simulations and empirical applications to macroeconomic forecasting in big data environment reveal that FGM outperforms equal-weighted forecasts and combined forecasts produced using graphical models without factors. With the superior performance observed at all forecast horizons, we find that the greater benefit from accounting for the common factors is evidenced at longer horizons.

	\section{Appendix} \label{appendixAtheor}
		\begin{spacing}{1.65}
	In this section we collected the proofs of Theorems \ref{theor1}-\ref{theor2}. We first present a Lemma which is used in the theoretical derivations.
	\begin{lem}\label{lemma1}  Let $l\in\{1,2\}\defeq\{\text{Factor GLASSO}, \text{Factor MB}\}$.
		~
		\begin{enumerate}[label=(\alph*)]
			\item $\vertiii{\bTheta}_1=\mathcal{O}(\kappa_{1l})$, where $\kappa_{1l}=d_T$ if $l=1$ which corresponds to Factor GLASSO, and $\kappa_{1l}=\bar{d}$ if $l=2$ which corresponds to Factor MB. This will be further abbreviated as $\kappa_{1l} \in \{d_T, \bar{d}\}_{l=1,2}$.
			\item $a \geq C_0>0$, where $a$ was defined in Section 3 and $C_0$ is a positive constant representing the minimal eigenvalue of $\bTheta$.
			\item $\abs{\widehat{a}-a}=\mathcal{O}_P(\kappa_{2l})$, where $\widehat{a}$ was defined in Section 3 and  $\kappa_{2l} \in \{\varrho_{1T}d_Ts_T, \varrho_{2T}\bar{d}^2 \}_{l=1,2}$.	
		\end{enumerate}
	\end{lem}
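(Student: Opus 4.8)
The plan is to treat the three parts separately, working from the Sherman--Morrison--Woodbury representation \eqref{equa18} together with the precision-matrix rates quoted from \cite{seregina2020FGL} and \cite{seregina2020sparse}.

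For part (a) I would start from \eqref{equa18}, which writes $\bTheta = \bTheta_{\varepsilon} - \bTheta_{\varepsilon}\bB\bM\bB'\bTheta_{\varepsilon}$ with $\bM \defeq [\bTheta_f + \bB'\bTheta_{\varepsilon}\bB]^{-1}$, and bound $\vertiii{\bTheta}_1 \le \vertiii{\bTheta_{\varepsilon}}_1 + \vertiii{\bTheta_{\varepsilon}\bB\bM\bB'\bTheta_{\varepsilon}}_1$ by the triangle inequality and submultiplicativity of the induced $\ell_1$-operator norm. The sparse term is immediate: since $\bTheta_{\varepsilon}$ has at most $\kappa_{1l}$ nonzero off-diagonal entries per row (this is exactly $d(\bTheta_{\varepsilon}) = \mathcal{O}_P(d_T)$ for Factor GLASSO and $\bar{d}$ for Factor MB) and its entries are uniformly bounded because $\Lambda_{\text{max}}(\bTheta_{\varepsilon}) = \Lambda_{\text{min}}(\bSigma_{\varepsilon})^{-1} = \mathcal{O}(1)$, we get $\vertiii{\bTheta_{\varepsilon}}_1 = \mathcal{O}(\kappa_{1l})$. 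The real work is the low-rank correction: here I would exploit that, under the pervasive-factor Assumption \ref{A2}, $\bB'\bTheta_{\varepsilon}\bB$ has smallest eigenvalue of order $p$, so $\vertiii{\bM}_{2} = \mathcal{O}(1/p)$, and combine this with $\norm{\bB}_{\text{max}} = \mathcal{O}(1)$ from Assumption \ref{C1} to show that the $1/p$ scaling offsets the $p$ summands produced by $\bB$ and $\bB'$, so that the correction does not inflate the order beyond $\kappa_{1l}$. This is the step I expect to be the main obstacle, because the $\ell_1$-operator norm does not interact with low-rank structure as cleanly as the spectral norm does, so the $p$-cancellation has to be tracked through the entries of $\bTheta_{\varepsilon}\bB$ rather than through a single spectral bound.

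For part (b) the argument is short: since $\norm{\biota_p}_2^2 = p$, the Rayleigh-quotient bound gives $a = \biota_p'\bTheta\biota_p / p = \biota_p'\bTheta\biota_p/(\biota_p'\biota_p) \ge \Lambda_{\text{min}}(\bTheta) \defeq C_0 > 0$, where positivity of the smallest eigenvalue of $\bTheta$ is guaranteed by the maintained eigenvalue conditions. If one prefers not to rely on $\Lambda_{\text{min}}(\bTheta)$ being bounded below, the same conclusion follows from the Woodbury form by writing $a = p^{-1}[\biota_p'\bTheta_{\varepsilon}\biota_p - (\bB'\bTheta_{\varepsilon}\biota_p)'\bM(\bB'\bTheta_{\varepsilon}\biota_p)]$, noting that the subtracted term is a $\bTheta_{\varepsilon}$-projection of $\biota_p$ onto the range of $\bB$ and hence cannot exceed $\biota_p'\bTheta_{\varepsilon}\biota_p$, and lower-bounding the residual by $\Lambda_{\text{min}}(\bTheta_{\varepsilon})$ times the squared norm of the component of $\biota_p$ lying outside the factor space, which is of order $p$.

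For part (c) I would reduce everything to the already-established precision-matrix rate. Writing $\widehat{a} - a = p^{-1}\biota_p'(\widehat{\bTheta} - \bTheta)\biota_p$ and using $\abs{\biota_p'\bM\biota_p} \le \sum_{i,j}\abs{m_{ij}} \le p\,\vertiii{\bM}_1$ for any $\bM \in \mathbb{R}^{p\times p}$, I obtain $\abs{\widehat{a} - a} \le \vertiii{\widehat{\bTheta} - \bTheta}_1$. Invoking the quoted bounds $\vertiii{\widehat{\bTheta}-\bTheta}_1 = \mathcal{O}_P(\varrho_{1T}d_Ts_T)$ for Factor GLASSO and $\vertiii{\widehat{\bTheta}-\bTheta}_1 = \mathcal{O}_P(\varrho_{2T}\bar{d}^{2})$ for Factor MB then delivers $\abs{\widehat{a} - a} = \mathcal{O}_P(\kappa_{2l})$ with $\kappa_{2l} \in \{\varrho_{1T}d_Ts_T,\ \varrho_{2T}\bar{d}^{2}\}$, which completes the lemma.
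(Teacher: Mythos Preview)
Your parts (b) and (c) are essentially the paper's proof: the Rayleigh-quotient lower bound and the reduction $\abs{\widehat{a}-a}\le \vertiii{\widehat{\bTheta}-\bTheta}_1$ via H\"older are exactly what the paper does.

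The gap is in part (a), precisely at the step you flag as the obstacle. Your direct row-count gives $\vertiii{\bTheta_{\varepsilon}}_1=\mathcal{O}(\kappa_{1l})$, but if you then bound the low-rank correction by submultiplicativity,
\[
\vertiii{\bTheta_{\varepsilon}\bB\bM\bB'\bTheta_{\varepsilon}}_1
\le \vertiii{\bTheta_{\varepsilon}}_1\,\vertiii{\bB}_1\,\vertiii{\bM}_1\,\vertiii{\bB'}_1\,\vertiii{\bTheta_{\varepsilon}}_1
=\mathcal{O}\!\bigl(\kappa_{1l}\cdot p\cdot p^{-1}\cdot 1\cdot \kappa_{1l}\bigr)=\mathcal{O}(\kappa_{1l}^{2}),
\]
which is one power of $\kappa_{1l}$ too many and would propagate to Theorem~\ref{theor1}. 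The paper closes this by first proving the Cauchy--Schwarz inequality $\vertiii{\bA}_1\le \sqrt{d(\bA)}\,\vertiii{\bA}_2$ for symmetric $\bA$, and applying it to $\bTheta_{\varepsilon}$ with $\vertiii{\bTheta_{\varepsilon}}_2=\Lambda_{\max}(\bTheta_{\varepsilon})=\mathcal{O}(1)$ to get the sharper $\vertiii{\bTheta_{\varepsilon}}_1=\mathcal{O}(\sqrt{\kappa_{1l}})$. With that, the same submultiplicative chain yields $\mathcal{O}(\sqrt{\kappa_{1l}}\cdot p\cdot p^{-1}\cdot\sqrt{\kappa_{1l}})=\mathcal{O}(\kappa_{1l})$. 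You already note that $\Lambda_{\max}(\bTheta_{\varepsilon})=\mathcal{O}(1)$, so you have the ingredients; the missing piece is exploiting the spectral bound together with the sparsity via Cauchy--Schwarz rather than counting entries directly.
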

	\begin{proof}
		~
		\begin{enumerate}[label=(\alph*)]
			\item To prove part (a) we use the following matrix inequality which holds for any $\bA \in \mathcal{S}_p$:
			\begin{equation}\label{a3}
			\vertiii{\bA}_1=\vertiii{\bA}_{\infty}\leq \sqrt{d(\bA)}\vertiii{\bA}_2,
			\end{equation}
			where $d(\bA)$ was defined at the beginning of Section 5. The proof of \eqref{a3} is a straightforward consequence of the Schwarz inequality.
			
			Sherman-Morrison-Woodbury formula together with \eqref{a3} and Assumptions \ref{B1}-\ref{B3} yield:
			\begin{align}\label{a4}
			\vertiii{\bTheta}_1 &\leq \vertiii{\bTheta_{\varepsilon}}_1+\vertiii{\bTheta_{\varepsilon}\bB\lbrack\bTheta_{f}+\bB'\bTheta_{\varepsilon}\bB \rbrack^{-1}\bB'\bTheta_{\varepsilon}}_1 \nonumber\\
			&=\mathcal{O}(\sqrt{\kappa_{1l}})+\mathcal{O}\Big(\sqrt{\kappa_{1l}}\cdot p \cdot \frac{1}{p} \cdot \sqrt{\kappa_{1l}}\Big) = \mathcal{O}(\kappa_{1l}).
			\end{align}
			\item Assumption \ref{A1} states that the minimal eigenvalue of $\bTheta$ is bounded away from zero, hence,
			\begin{equation*}
			a=\biota'_p\bTheta\biota_p/p\geq  C_0>0.
			\end{equation*}
			\item Using the H\"{o}lders inequality, we have
			\begin{align*} 
			\abs{\widehat{a}-a}=\abs{\frac{\biota'_{p}(\widehat{\bTheta}-\bTheta)\biota_p}{p}}\leq \frac{\norm{(\widehat{\bTheta}-\bTheta)\biota_p }_1\norm{\biota_p}_{\infty}}{p} &\leq \vertiii{ \widehat{\bTheta}-\bTheta}_1 \\
			&=\mathcal{O}_P(\kappa_{2l})=o_P(1),
			\end{align*}
			where the last rate is obtained using the assumptions of Theorem \ref{theor1}.
		\end{enumerate}		
	\end{proof}	
	\subsection{Proof of Theorem \ref{theor1}}
	First, note that the forecast combination weight can be written as
	\begin{align*}
	\widehat{\bw}-\bw &= \frac{\Big((a\widehat{\bTheta}\biota_p) - (\hat{a}\bTheta\biota_p)  \Big)/p}{\hat{a}a}\\
	&=\frac{ \Big((a\widehat{\bTheta}\biota_p) - (a\bTheta\biota_p) + (a\bTheta\biota_p) - (\hat{a}\bTheta\biota_p) \Big)/p  }{\hat{a}a}.
	\end{align*}
	As shown in \cite{Caner2019}, the above can be rewritten as
	\begin{align}\label{a1}
	\norm{\widehat{\bw}-\bw}_1\leq \frac{a\frac{\norm{(\widehat{\bTheta}-\bTheta)\biota_p}_1 }{p}+\abs{a-\widehat{a} }\frac{\norm{\bTheta\biota_p}_1}{p}}{\abs{\widehat{a}}a}.
	\end{align}
	Prior to bounding the terms in \eqref{a1}, we first present an inequality which is used in the derivations. Let $\bA \in \mathbb{R}^{p\times p}$ and $\bv \in \mathbb{R}^{p\times1}$. Also, let $\bA_{j}$ and $\bA'_{j}$ be a $p \times 1$ and $1\times p$ row and column vectors in $\bA$, respectively.
	\begin{align}\label{a2}
	\norm{\bA\bv}_1 &= \abs{\bA'_{1}\bv}+\ldots+\abs{\bA'_{p}\bv} \leq \norm{\bA_1}_1\norm{\bv}_{\infty}+\ldots+\norm{\bA_p}_1\norm{\bv}_{\infty}\\
	&=\Bigg(\sum_{j=1}^{p} \norm{\bA_j}_1 \Bigg)\norm{\bv}_{\infty} \leq p \max_{j}\abs{\bA_{j}}_1\norm{\bv}_{\infty}.\nonumber
	\end{align}
	H\"{o}lders inequality was used to obtain each inequality in \eqref{a2}. If $\bA \in \mathcal{S}_{p}$, then the last expression can be further reduced to $p \vertiii{\bA}_1\norm{\bv}_{\infty}$.
	
	Let us now bound the right-hand side of \eqref{a1}. In the numerator we have:
	\begin{align}\label{a5}
	\frac{\norm{(\widehat{\bTheta}-\bTheta)\biota_p}_1 }{p} \leq \vertiii{\bTheta}_1=\mathcal{O}_P(\kappa_{3l}),
	\end{align}
	where $\kappa_{3l} \in \{\varrho_{1T}d_Ts_T, \varrho_{2T}\bar{d}^{2}\}_{l=1,2}$, the rates were derived in \cite{seregina2020FGL,seregina2020sparse} as discussed at the beginning of Section 5, and the inequality follows from \eqref{a2}.
	\begin{equation}\label{a6}
	\frac{\norm{\bTheta\biota_p}_1}{p}\leq \vertiii{\bTheta}_1=\mathcal{O}(\kappa_{1l}),
	\end{equation}
	where the rate follows from Lemma \ref{lemma1} (a) and the inequality is obtained from \eqref{a2}. Combining \eqref{a5}, \eqref{a6}, and Lemma \ref{lemma1} (c) we get:
	\begin{align}
	a\frac{\norm{(\widehat{\bTheta}-\bTheta)\biota_p}_1 }{p}+\abs{a-\widehat{a} }\frac{\norm{\bTheta\biota_p}_1}{p} = \mathcal{O}(1)\cdot\mathcal{O}_P(\kappa_{3l})+\mathcal{O}_P(\kappa_{2l})\cdot \mathcal{O}(\kappa_{1l}) = \mathcal{O}_P(\kappa_{4l})=o_P(1),
	\end{align}
	where $\kappa_{4l} \in \{\varrho_{1T}d_{T}^{2}s_T, \varrho_{2T}\bar{d}^{3}\}_{l=1,2}$ and the last equality holds under the assumptions of Theorem \ref{theor1}.
	
	For the denominator of \eqref{a1} it easy to see that $\abs{\widehat{a}}a=\mathcal{O}_P(1)$ using the results of Lemma \ref{lemma1} (b).
	\subsection{Proof of Theorem \ref{theor2}}
	Using Lemma \ref{lemma1} (b)-(c), we get
	\begin{align*}
	\abs{\frac{\hat{a}^{-1}}{a^{-1}}-1}=\frac{\abs{a-\hat{a}}}{\abs{\hat{a}}} = \mathcal{O}_P(\kappa_{2l})=o_P(1),
	\end{align*}
	where the last rate is obtained using the assumptions of Theorem \ref{theor2}.
\end{spacing}
	\cleardoublepage	
	\phantomsection	
	\addcontentsline{toc}{section}{References}	
	\setlength{\baselineskip}{14pt}
	\bibliographystyle{apalike}
	\bibliography{LeeSereginaForecasting}
\cleardoublepage
\phantomsection
\addcontentsline{toc}{section}{Figures}
	\begin{figure}[!htb]
	\centering
	\includegraphics[width=.95\linewidth]{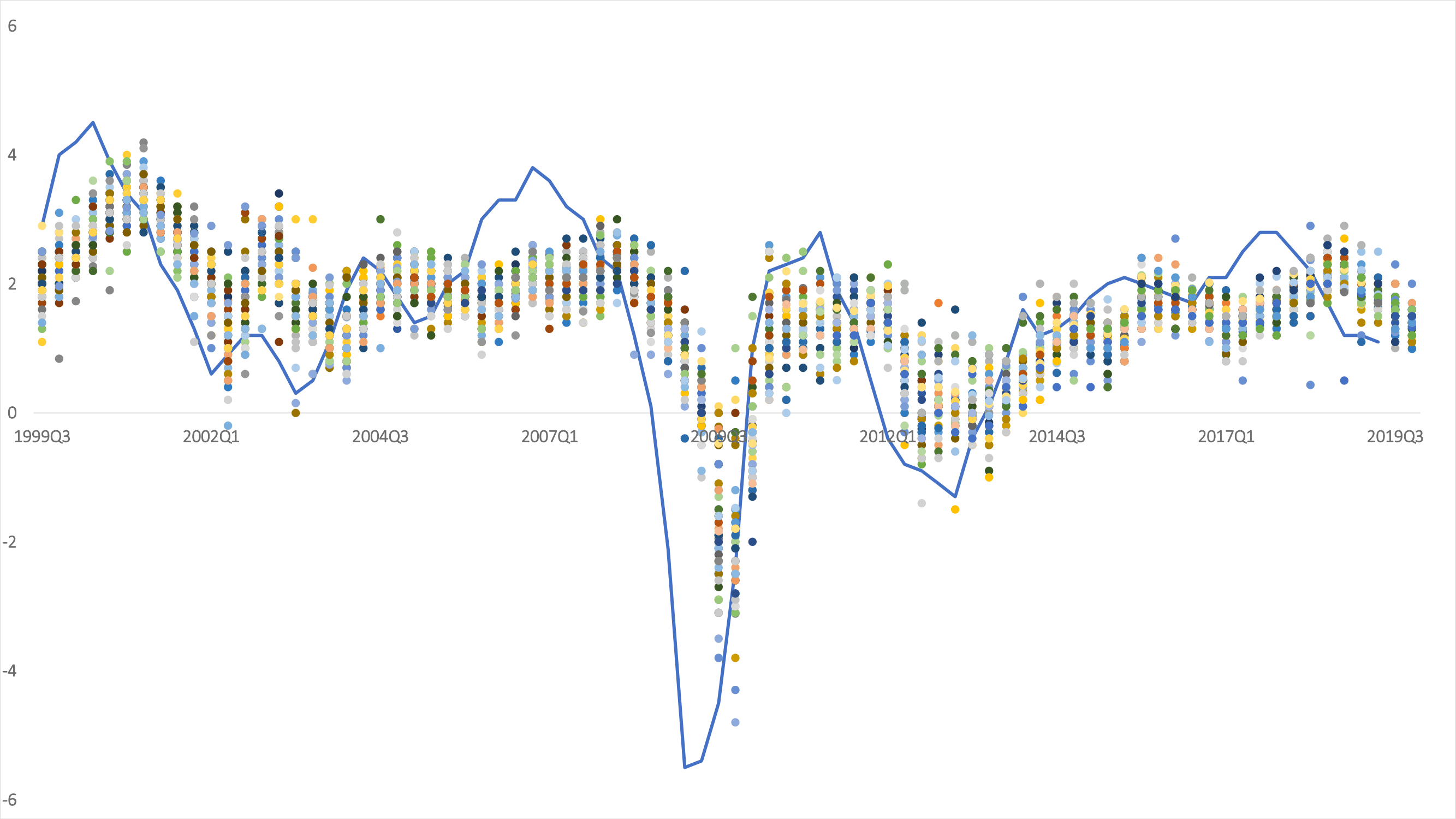}
	\bigskip
	\caption{\textbf{The European Central Bank's (ECB) Survey of Professional Forecasters (SPF)}. Each circle denotes the forecast of each professional forecaster in the SPF for the quarterly 1-year-ahead forecasts of Euro-area real GDP growth, year-on-year percentage change. Actual series is the blue line. \textit{Source: \href{https://www.ecb.europa.eu/stats/ecb_surveys/survey_of_professional_forecasters/html/index.en.html}{European Central Bank}.}}
	\bigskip
	\label{fig1}
\end{figure}
	\begin{figure}[!htb]
	\centering
	\includegraphics[width=.99\linewidth]{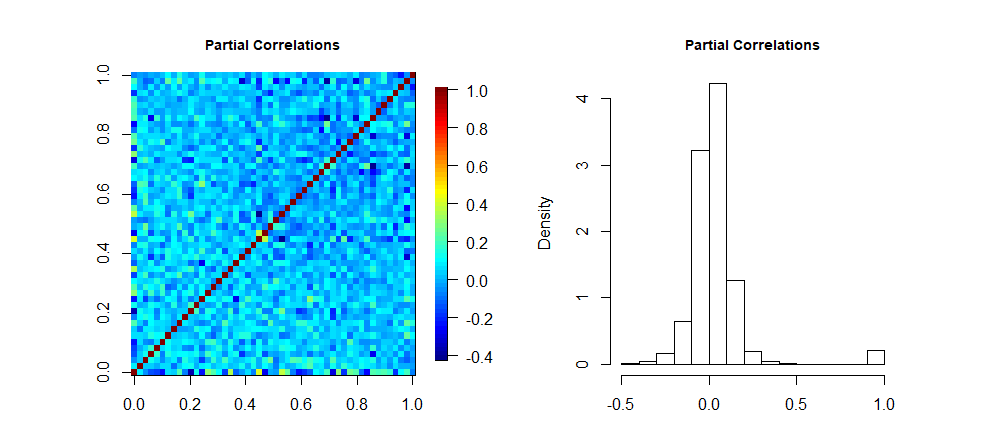}
	\bigskip
	\caption{\textbf{Heatmap and histogram of population partial correlations.} $T=1000$, $p=50$, $q=2$.}
	\bigskip
	\label{fig1a}
\end{figure}
	\begin{figure}[!htb]
	\centering
	\includegraphics[width=.99\linewidth]{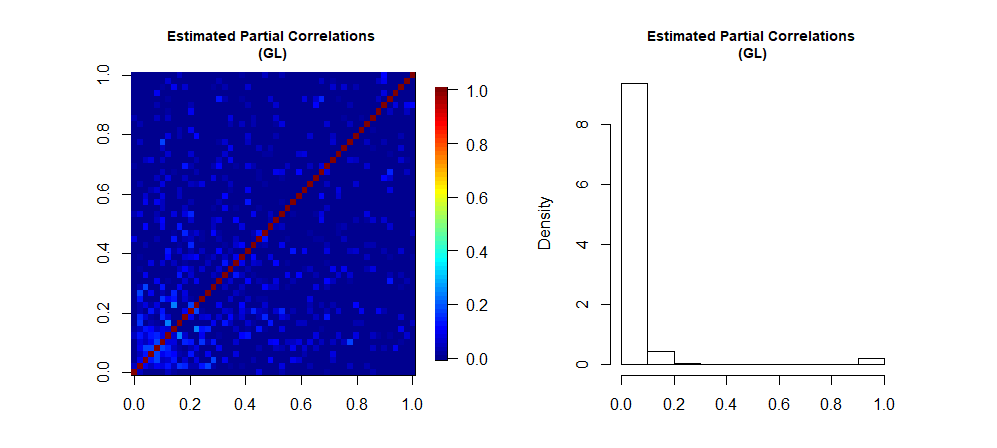}
	\bigskip
	\caption{\textbf{Heatmap and histogram of sample partial correlations estimated using GLASSO with no factors.} $T=1000$, $p=50$, $q=2$, $\hat{q}=0$.}
	\bigskip
	\label{fig1aa}
\end{figure}
	\begin{figure}[!htb]
	\centering
	\includegraphics[width=.99\linewidth]{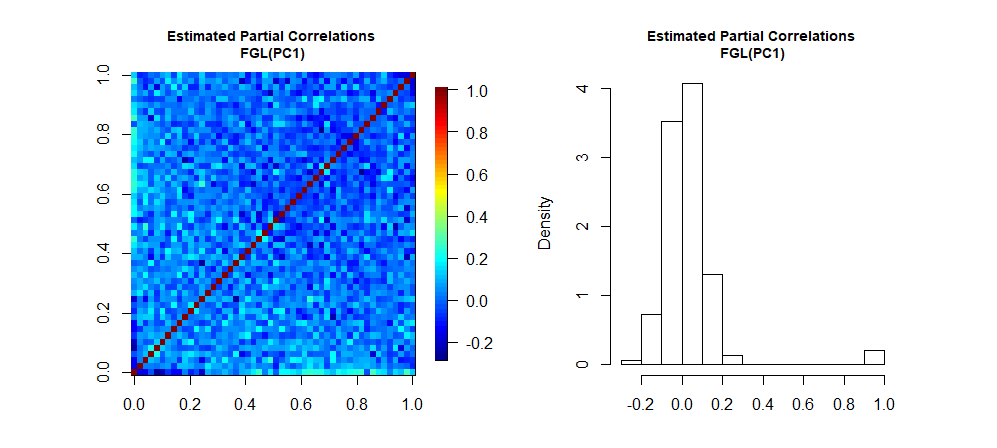}
	\bigskip
	\caption{\textbf{Heatmap and histogram of sample partial correlations estimated using Factor GLASSO with 1 statistical factor.} $T=1000$, $p=50$, $q=2$, $\hat{q}=1$.}
	\bigskip
	\label{fig2a}
\end{figure}
	\begin{figure}[!htb]
	\centering
	\includegraphics[width=.99\linewidth]{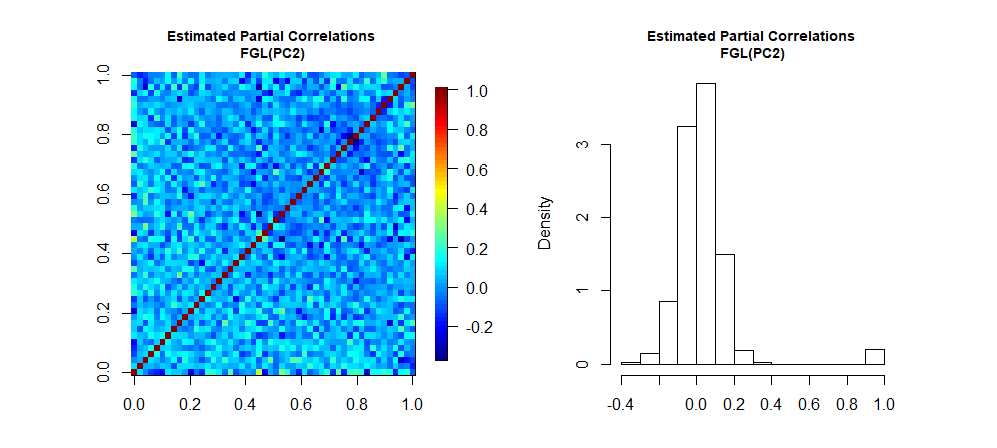}
	\bigskip
	\caption{\textbf{Heatmap and histogram of sample partial correlations estimated using Factor GLASSO with 2 statistical factors.} $T=1000$, $p=50$, $q=2$, $\hat{q}=2$.}
	\bigskip
	\label{fig3a}
\end{figure}
	\begin{figure}[!htb]
	\centering
	\includegraphics[width=.99\linewidth]{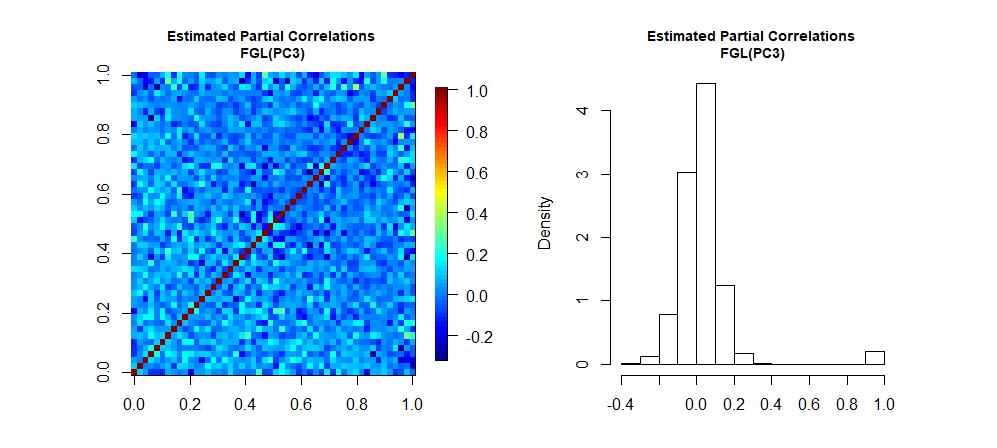}
	\bigskip
	\caption{\textbf{Heatmap and histogram of sample partial correlations estimated using Factor GLASSO with 3 statistical factors.} $T=1000$, $p=50$, $q=2$, $\hat{q}=3$.}
	\bigskip
	\label{fig4a}
\end{figure}
\cleardoublepage

\begin{figure}[!htbp]
	\centering
	\includegraphics[width=0.95\textwidth]{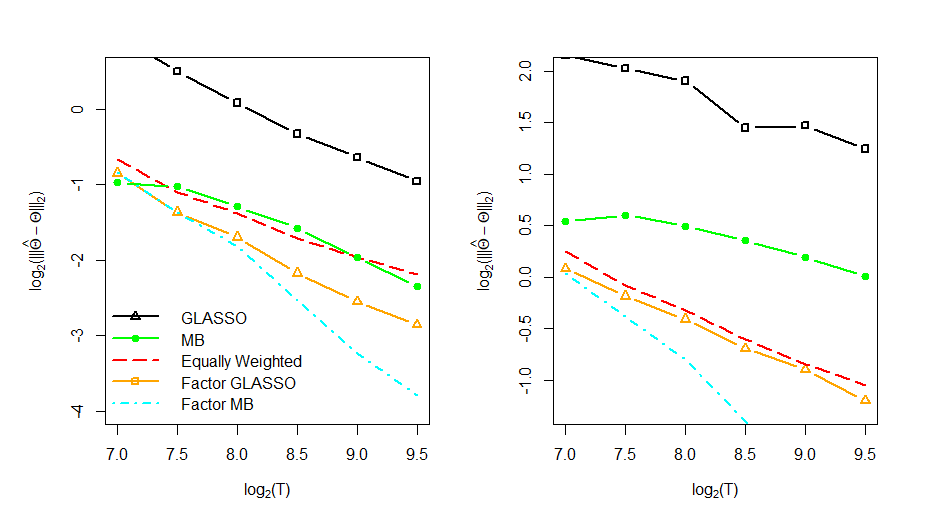}
	\bigskip
	\caption{\textbf{Averaged errors of the estimators of $\bTheta$ on logarithmic scale (base 2). $p = T^{0.85}$, $q = 2(\log(T))^{0.5}$, $s_T = \mathcal{O}(T^{0.05})$.}}
	\label{f18}
\end{figure}
\begin{figure}[!htbp]
	\centering
	\includegraphics[width=0.55\textwidth]{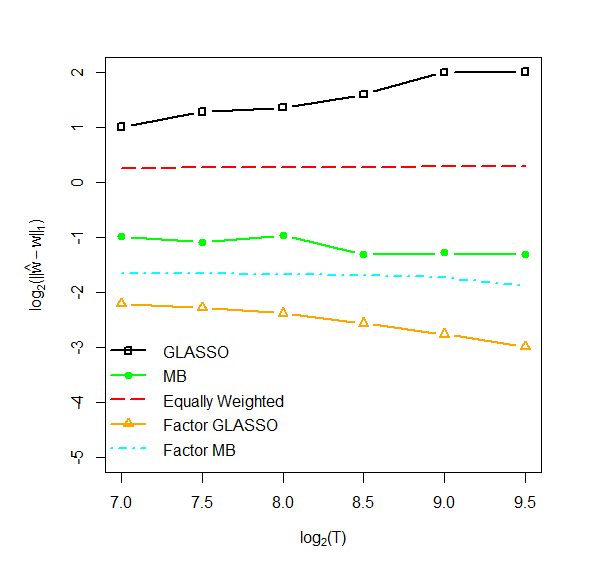}
	\bigskip
	\caption{\textbf{Averaged errors of the estimator of $\bw$ (base 2) on logarithmic scale. $p = T^{0.85}$, $q = 2(\log(T))^{0.5}$, $s_T = \mathcal{O}(T^{0.05})$.}}
	\label{f19}
\end{figure}

\begin{figure}[!htbp]
	\centering
	\includegraphics[width=\textwidth]{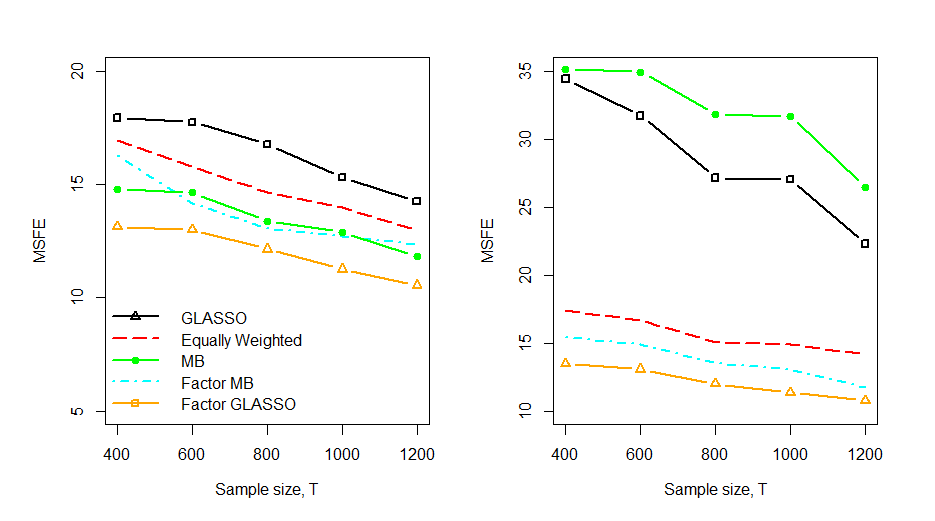}
	\bigskip
\caption{\textbf{Plots of the MSFE over the sample size $\bm{T}$}. $c_1=0$ (left), $c_1=0.75$ (right), $c_2=0.9, \ N=100, \ r=5, \sigma_\xi=1,\ L=7,\ K=2,\ p=24,\ q=5, \ \rho=0.9,\ \phi=0.8$.}
\label{fig2}
\end{figure}
\cleardoublepage

\begin{table}[]
	\phantomsection
	\addcontentsline{toc}{section}{Tables} 
	\centering
	\resizebox{0.7\textwidth}{!}{%
		\begin{tabular}{@{}cccccc@{}}
			\toprule
			& \multicolumn{5}{c}{\textsc{INDPRO}} \\ \midrule
			$h$ & EW & GLASSO & Factor GLASSO & MB & Factor MB \\ \midrule
			1 & 2.77E-04 & 1.51E-04 & \textbf{1.24E-04} & 2.23E-04 & 1.28E-04 \\
			2 & 3.26E-04 & 1.79E-04 & \textbf{5.59E-05} & 1.61E-04 & 1.38E-04 \\
			3 & 1.55E-04 & 9.77E-05 & \textbf{3.81E-05} & 1.17E-04 & 6.54E-05 \\
			4 & 1.18E-04 & 7.60E-05 & \textbf{2.38E-05} & 1.03E-04 & 2.65E-05 \\ \midrule
			& \multicolumn{5}{c}{\textsc{S\&P500}} \\ \midrule
			1 & 1.40E-03 & 1.39E-03 & 1.37E-03 & \textbf{1.34E-03} & 9.57E-03 \\
			2 & 1.71E-03 & 1.44E-03 & \textbf{8.95E-04} & 1.55E-03 & 1.01E-03 \\
			3 & 1.66E-03 & 1.34E-03 & \textbf{3.48E-04} & 1.43E-03 & 6.69E-04 \\
			4 & 1.27E-03 & 1.06E-03 & \textbf{3.95E-04} & 9.55E-04 & 7.91E-04 \\ \midrule			
			& \multicolumn{5}{c}{\textsc{CPI: All Items}} \\ \midrule
			1 & 6.88E-06 & 6.75E-06 & \textbf{5.84E-06} & 6.46E-06 & 8.98E-06\\
			2 & 1.05E-05 & 1.06E-05 & \textbf{8.39E-06} & 9.93E-06 & 9.93E-06 \\
			3 & 1.52E-05 & 1.47E-05 & \textbf{9.36E-06} & 1.56E-05 & 1.34E-05 \\
			4 & 1.63E-05 & 1.63E-05 & \textbf{7.00E-06} & 1.60E-05 & 1.14E-05\\ \midrule		
			& \multicolumn{5}{c}{\textsc{Real Personal Consumption}} \\ \midrule
			1 & 3.05E-05 & \textbf{2.70E-05} & 4.18E-05 & 2.88E-05 & 2.74E-05\\
			2 & 2.65E-04 & 8.52E-05 & 2.79E-05 & 8.11E-05 & \textbf{2.39E-05} \\
			3 & 7.94E-04 & 1.41E-04 & 2.91E-05 & 6.42E-05 & \textbf{2.84E-05} \\
			4 & 8.65E-04 & 7.87E-04 & \textbf{2.61E-05} & 6.42E-05 & 2.63E-05\\ \midrule 						
			& \multicolumn{5}{c}{\textsc{M1 Money Stock}} \\ \midrule
			1 & 5.42E-05 & 5.18E-05 & \textbf{4.99E-05} & 5.40E-05 & 5.47E-05\\
			2 & 5.82E-05 & 1.58E-04 & 7.27E-05 & 5.86E-05 & \textbf{5.40E-05} \\
			3 & 5.97E-05 & 1.56E-04 & 7.44E-05 & 5.96E-05 & \textbf{5.64E-05} \\
			4 & 5.97E-05 & 1.63E-04 & 6.97E-05 & 5.94E-05 & \textbf{5.78E-05}\\ \midrule 															
			& \multicolumn{5}{c}{\textsc{UNRATE}} \\ \midrule
			1 & 0.2531 & 0.0858 & 0.0109 & 0.0557 & \textbf{0.0107} \\
			2 & 0.3758 & 0.1334 & \textbf{0.0066} & 0.0448 & 0.0081 \\
			3 & 0.0743 & 0.0651 & 0.0066 & 0.0532 & \textbf{0.0051} \\
			4 & 2.1999 & 0.6871 & \textbf{0.1578} & 1.0973 & 0.2510 \\ \midrule
			& \multicolumn{5}{c}{\textsc{FEDFUNDS}} \\ \midrule
			1 & 0.0609 & 0.1813 & \textbf{0.0205} & 0.0424 & 0.0448 \\
			2 & 0.1426 & 1.2230 & \textbf{0.0288} & 0.0675 & 0.0416 \\
			3 & 0.2354 & 1.2710 & \textbf{0.0508} & 0.1217 & 0.1038 \\
			4 & 0.3702 & 1.4672 & \textbf{0.0592} & 0.2470 & 0.1962 \\ \bottomrule
		\end{tabular}%
	}
	\caption{\textbf{Prediction of Monthly Macroeconomic Variables.} The numbers are MSFEs with the smallest MSFE in each row in bold font. $h$ indicates the forecast horizon, EW stands for the \enquote{Equal-Weighted} forecast, GLASSO and MB are the models that do not use the factor structure in the forecast errors. Factor GLASSO and Factor MB are our proposed Factor Graphical Models.}
	\label{tab1}
\end{table}
\cleardoublepage
\renewcommand{\appendixpagename}{Supplemental Appendix}
\renewcommand\appendixtocname{Supplemental Appendix}
\begin{appendices} 
	\renewcommand{\thesection}{\Alph{section}}
	\renewcommand{\thesubsection}{\Alph{section}.\arabic{subsection}}
	\renewcommand{\theequation}{\Alph{section}.\arabic{equation}}
	\captionsetup{%
		figurewithin=section,
		tablewithin=section
	}
\section{Additional Simulations}\label{appendixA}
Figures \ref{fig3}-\ref{fi7} show the performance in terms of MSFE for different number of predictors $N$, different values of $c_2$, $\phi$, $\rho$ and $q$: Factor-based models (Factor GLASSO and Factor MB) outperform the equal-weighted forecast combination and the standard GLASSO and nodewise regression without any factor structure. As evidenced from the figures, these findings are robust to the changes in the model parameters. Importantly, Figure \ref{fi7} shows the scenario when the true number of principal components, $r$, is equal to 5, whereas none of the forecasters use PCA for prediction: in this case including at least 2 common components of the forecasting errors reduces MSFE, such that Factor GLASSO and Factor MB outperform EW forecast combination. 
\begin{figure}[hbt!]
	\centering
	\includegraphics[width=0.65\textwidth]{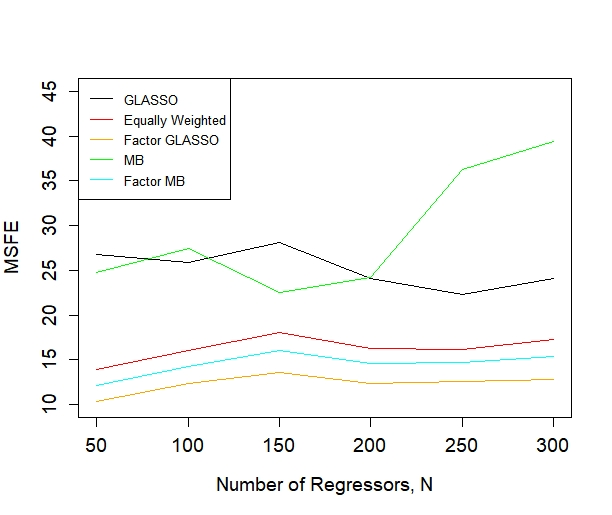}
	\bigskip
	\caption{\textbf{Plots of the MSFE over the number of predictors $\bm{N}$}. $c_1=0.75, \ c_2=0.9, \\ T=800, \ r=5,\ \sigma_\xi=1,\ L=7,\ K=2,\ p=24,\ q=5, \ \rho=0.9,\ \phi=0.8$.}
	\label{fig3}
\end{figure}
\begin{figure}[!htb]
	\centering
	\includegraphics[width=0.65\textwidth]{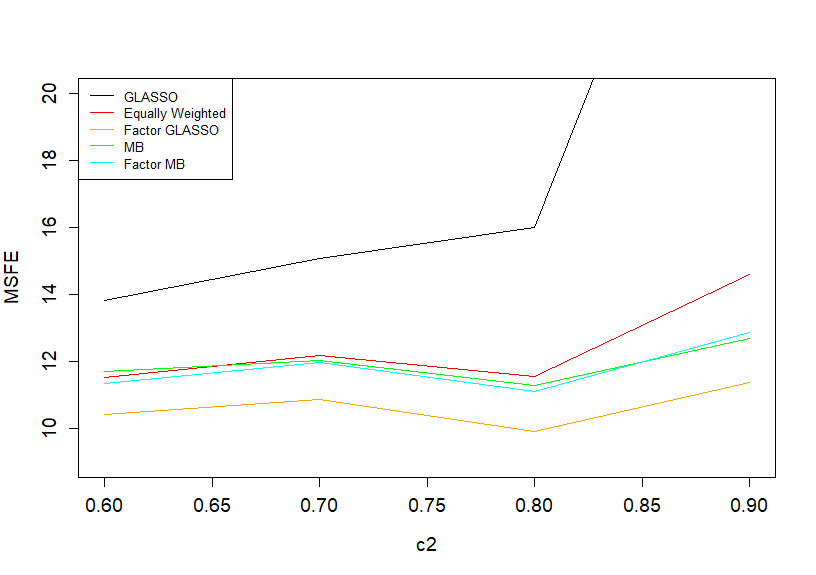}
	\bigskip
	\caption{\textbf{Plots of the MSFE over the values of $\bm{c_2}$.} $c_1=0.75, \ c_2 \in \{0.6, 0.7, 0.8, 0.9\}, \\ T=800, \ N=100$, \ $r=5,\ \sigma_\xi=1,\ L=7,\ K=2,\ p=24,\ q=5, \ \rho=0.9,\ \phi=0.8$.}
	\label{fig4}
\end{figure}
\begin{figure}[hbt!]
	\centering
	\includegraphics[width=0.65\textwidth]{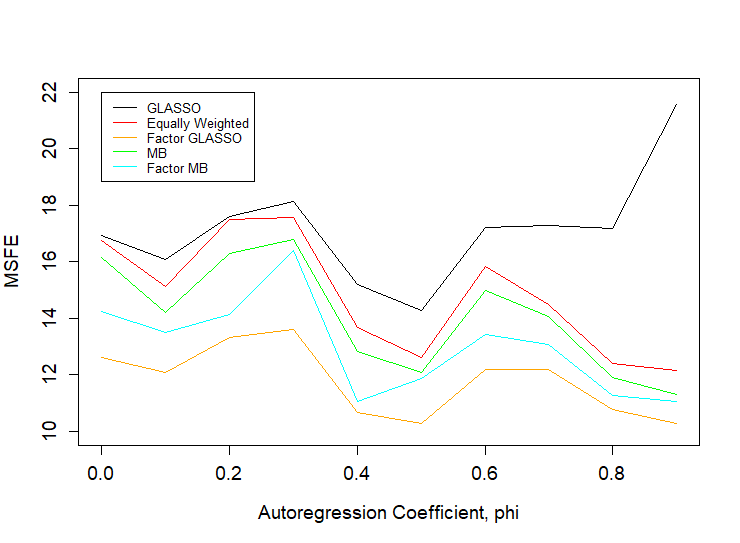}
	\bigskip
	\caption{\textbf{Plots of the MSFE over the values of $\bm{\phi}$.} $c_1=0.75, \ c_2=0.8, \ T=800, \\ N=100, \ r=5,\ \sigma_\xi=1$,\ $L=7,\ K=2,\ p=24,\ q=5, \ \rho=0.9,\ \phi \in \{0, 0.1, \ldots, 0.9\}$.}
	\label{fi5}
\end{figure}
\begin{figure}[hbt!]
	\centering
	\includegraphics[width=0.65\textwidth]{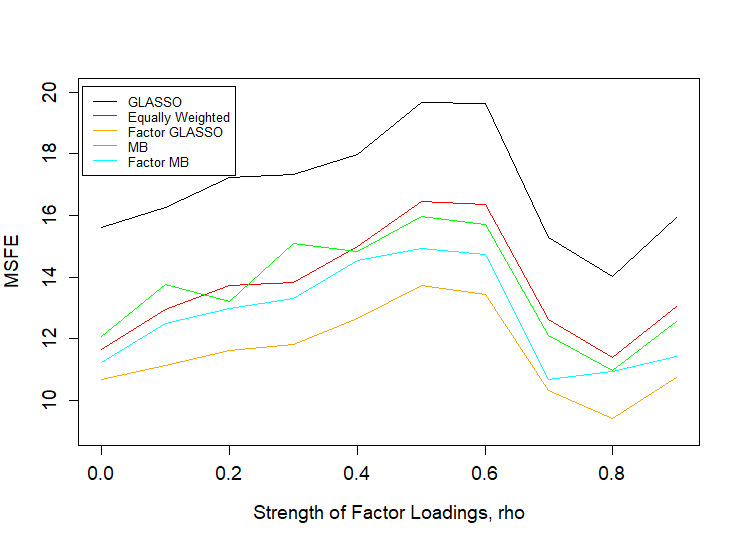}
	\bigskip
	\caption{\textbf{Plots of the MSFE over the values of $\bm{\rho}$.} $c_1=0.75, \ c_2=0.8, \ T=800, \\ N=100, \ r=5,\ \sigma_\xi=1$,\ $L=7,\ K=2,\ p=24,\ q=5, \ \rho \in \{0, 0.1, \ldots, 0.9\},\ \phi=0.7$.}
	\label{fi6}
\end{figure}
\begin{figure}[hbt!]
	\centering
	\includegraphics[width=0.65\textwidth]{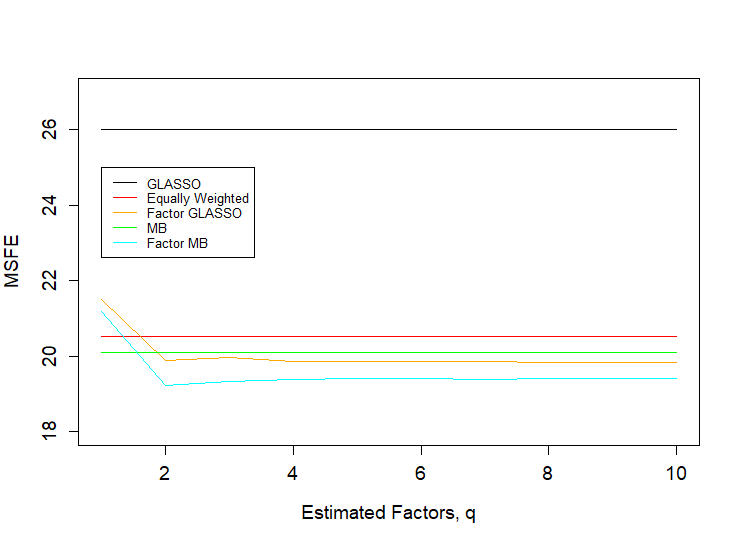}
	\bigskip
	\caption{\textbf{Plots of the MSFE over the values of $\bm{q}$.} $c_1=0.75, \ c_2=0.9, \ T=800, \\ N=100, \ r=5,\ \sigma_\xi=1$,\ $L=12,\ K=0,\ p=13, \ q \in \{0, 1, \ldots, 10\},\ \rho=0.9,\ \phi=0.8$.}
	\label{fi7}
\end{figure}
\end{appendices}
\end{document}